\ifdefined\pdfminorversion\pdfminorversion=7\fi
\documentclass[runningheads]{llncs}
\usepackage[T1]{fontenc}
\usepackage[pdftex]{graphicx}
\usepackage{xcolor}
\usepackage{color}
\usepackage{comment}
\usepackage{tablefootnote}
\usepackage[hang,small,bf]{caption}
\usepackage[subrefformat=parens]{subcaption}
\usepackage[colorlinks=true,linkcolor=magenta,citecolor=blue]{hyperref}
\usepackage{booktabs}
\usepackage{multirow}
\usepackage{amsmath}
\usepackage{amssymb}

\newcommand{\dist}{\mathsf{dist}}

\begin{document}
\title{Constant Factor Optimal 2-Resilient Local Failover Routing Scheme on Directed Graphs}
\titlerunning{Constant Factor Optimal 2-Resilient Local Failover Routing Scheme}
%
\author{Kaito Harada \and Naoki Kitamura \and Yuki Kawashima \and Yuya Terashima}
\authorrunning{K. Harada et al.}
%
\institute{The University of Osaka, Osaka, Japan\\
\email{\{k-harada,n-kitamura,y-kawashima,t-yuya\}@ist.osaka-u.ac.jp}}
\maketitle              
\begin{abstract}
  Local failover routing delivers a packet from a source $s$ to a destination $t$ using only pre-computed, per-node forwarding rules together with a small rewritable packet header, remaining correct even when up to $k$ arcs of the network fail.
  We study this problem on directed graphs with $n$ nodes, measuring efficiency by the number of rewritable bits carried in the header, and focus on the case of $k \ge 2$ failures.
  Our first contribution is an improved upper bound for general $k$.
  We design a scheme using at most $\lceil \log\!\binom{2n+k-4}{k-1} \rceil+1$ bits, improving the best previously known $\lceil \log \binom{2n+k-3}{k} \rceil$ bit scheme for every $k < 2n-3$.
  For the special case $k=2$, we further improve this to a scheme using at most $\left\lceil \log\!\left(1+2\lfloor 2(n-1)/3 \rfloor\right)\right\rceil$ bits.
  Our second contribution is a family of improved lower bounds, obtained via a new path-gadget construction.
  For $k=2$, we prove a lower bound of $\left\lceil \log \lfloor n/3 \rfloor \right\rceil$ bits, and for general $k$, writing $k_2 = \lfloor k/2 \rfloor$, a lower bound of $\left\lceil k_2 \log \left\lfloor (n-1+k_2)/(3k_2) \right\rfloor \right\rceil$ bits, improving on the best previously known bound.
  In particular, for $k=2$, our upper bound of $\left\lceil \log\!\left(1+2\lfloor 2(n-1)/3 \rfloor\right)\right\rceil$ bits and our lower bound of $\left\lceil \log \lfloor n/3 \rfloor \right\rceil$ bits differ by at most $3$ bits.

  \keywords{failover routing \and networks \and directed graphs.}
\end{abstract}
\section{Introduction}
\subsection{Background and Our Results}
In communication networks~\cite{CKRRS21}, maintaining routing connectivity is critically important, and hence modern network systems implement some form of fast failover routing to respond quickly to equipment failures~\cite{RH20}. This is a mechanism that forwards packets from the source to the destination using only pre-computed routing tables, even when multiple links have failed.
We say a local failover routing scheme is \emph{$k$-resilient} if, when up to $k$ link failures occur, packets are correctly delivered between any pair of nodes that remain connected after the failures. In this process, routing uses only pre-computed local forwarding rules and rewritable packet headers, without any network-wide recalculation.

There are many studies of failover routing on undirected graphs, and it has been investigated whether routing is possible depending on the number of bits in the packet header, the type of failure, the graph properties, etc. In contrast, there is not much research on directed graphs. Several previous studies have established the limits of the number of bits required in packet headers for local failover routing in directed graphs. However, an optimal routing scheme has been shown only for the case of a single failure, and a gap between the upper and lower bounds still remains when multiple failures occur. In this paper, we present a nearly optimal routing scheme for cases in which the number of failures lies within a certain range.

The local failover routing for directed networks is just at the beginning.
Grobe et al.~carefully examined whether existing methods for undirected graphs can be applied to directed graphs, and compared the existing methods with their proposed method through simulations. As a result, they showed that their application is superior to the existing algorithms in most topologies~\cite{grobe2024local}.
van den Akker et al.~analyzed real-world topologies and showed that two-edge failure tolerance is possible in many real networks using two additional bits~\cite{van2024short}. However, these two earlier results do not provide any theoretical analyses.
van den Akker et al.~investigated local failover routing in directed graphs and gave upper and lower bounds on the number of bits required in the packet header~\cite{EK24}. They showed that at least one bit is required for one failure in the packet header and that $\log n$ bits are sufficient to tolerate one failure. They also showed that at least $\lceil \log (k+1) \rceil$ bits are required for $k > 1$ failures, and $k \lceil\log|E|\rceil$ bits are sufficient to tolerate $k$ failures.

Subsequently, several improvements to these upper and lower bounds were presented.
van den Akker et al.~showed that routing is possible using a single bit when there is one failure and proposed an optimal method for handling a single failure~\cite{VWF26}. They also improved the existing lower bound for cases involving two or more failures~\cite{VWF}. Furthermore, Kawashima et al.~proposed a new routing method and lower-bound graph, thereby improving the upper and lower bounds for scenarios with two or more failed arcs~\cite{KKI26}. As a result, they demonstrated a routing method that is asymptotically near-optimal when the number of failures falls within a specific range. Table~\ref{tab:previous} summarizes the results of these existing studies.

Thus, in previous work on local failover routing in directed graphs, no routing scheme achieving an optimal number of bits has been proposed for the case where the number of edge failures is $2$, or for the case where the number of failures is roughly greater than the number of vertices. In this paper, we present new results for these cases.

\paragraph{Contribution.}
In this paper, we present two new schemes and improve the lower bound for general $k \ge 2$.

Our first contribution is an improved upper bound for a $k$-resilient local failover routing scheme for any $2 \le k < 2n-3$.
Our proposed scheme $\mathcal{C}_k$ uses at most
$\lceil \log\binom{2n+k-4}{k-1}\rceil+1$ bits
(Theorem~\ref{thm:multi-failure}).
For every $k < 2n-3$, it improves the underlying state-space bound of
the previous scheme~\cite{KKI26}; the resulting integer bit bounds may
coincide for some values of $n$ and $k$ because of rounding.

Our second contribution is a further-improved 2-resilient scheme.
For the special case $k=2$, restricting this construction to a single well-chosen $s$--$t$ path yields a further-improved scheme $\mathcal{D}$ using at most $\left\lceil \log\!\left(1+2\lfloor 2(n-1)/3 \rfloor\right)\right\rceil$ bits (Theorem~\ref{thm:2-resilient-path}).

Our third contribution is a family of improved lower bounds (Theorem~\ref{thm:lb-general-k}), obtained via a new path-gadget construction that forces the packet header to distinguish among many candidate failure patterns: a lower bound of $\left\lceil \log \lfloor n/3 \rfloor \right\rceil$ bits for $k=2$, and, by cascading $\lfloor k/2 \rfloor$ copies of the gadget, a lower bound of $\left\lceil \lfloor k/2 \rfloor \log \left\lfloor (n-1+\lfloor k/2\rfloor)/(3\lfloor k/2\rfloor) \right\rfloor \right\rceil$ bits for general $k$, both improving on the best previously known bounds.

Among all previously known results (Table~\ref{tab:previous}), $k=2$ was the sole exception where the upper and lower bounds were not asymptotically tight: for $k=1$ the bound was exactly $1$ bit, and for $k \ge 3$ the upper and lower bounds were both $\Theta(k\log(n/k))$ bits~\cite{KKI26}. For $k=2$, however, the best known lower bound was only $\left\lceil \log \log (\lfloor n/4 \rfloor-2) \right\rceil+2$ bits~\cite{VWF} --- doubly logarithmic in $n$ --- against a $\Theta(\log n)$-bit upper bound~\cite{KKI26}, so the gap grew unboundedly with $n$. Our results close this last exception: for $k=2$, our new upper and lower bounds narrow the gap to within $3$ bits, and, more generally, whenever $k=\Theta(n)$ within the range where our lower bound applies (i.e., $\lfloor k/2 \rfloor \le (n-1)/8$), both this lower bound and the upper bound of $\mathcal{C}_k$ are $\Theta(n)$ bits --- so the upper and lower bounds are now asymptotically tight across the entire range $k=O(n)$.
Table~\ref{tab:related} summarizes the known results and our new results on local failover routing.

\begin{table}[t]
  \centering
  \caption{Summary of previous works, where $k_i = \lfloor k/i \rfloor$.}
  \label{tab:previous}
  \setlength{\tabcolsep}{6pt}
  \renewcommand{\arraystretch}{1.3}
  \resizebox{\linewidth}{!}{%
    \begin{tabular}{c|c|c}
      Failures         & Upper Bound                                                                          & Lower Bound                                                                          \\
      \hline
      $k=1$            & $1$ bit~\cite{VWF26}                                                                 & $1$ bit~\cite{EK24}                                                                  \\
      \hline
      $k=2$            & $\left\lceil \log \binom{2n-1}{2} \right\rceil$ bits~\cite{KKI26}                    & $\left\lceil \log \log (\lfloor n/4 \rfloor-2) \right\rceil+2$~\cite{VWF}            \\
      \hline
      $k=3$            & \multirow{3}{*}{$\left\lceil \log\!\binom{2n+k-3}{k}\right\rceil$ bits~\cite{KKI26}} & $\left \lceil \log (n-1) \right \rceil-1$ bits~\cite{KKI26}                          \\
      \cline{1-1}\cline{3-3}
      $k \le 3(n-1)/8$ &                                                                                      & $\left\lceil k_3 \log \lfloor (n-1)/(2k_3)  \rfloor  \right\rceil$ bits~\cite{KKI26} \\
      \cline{1-1}\cline{3-3}
      $3(n-1)/8<k$     &                                                                                      & $\left \lceil (n-1)/4 \right \rceil$ bits~\cite{KKI26}                               \\
      \hline
    \end{tabular}%
  }
\end{table}

\begin{table}[t]
  \centering
  \caption{Summary of the best known results after incorporating our contributions, where $k_i = \lfloor k/i \rfloor$.}\label{tab:related}
  \setlength{\tabcolsep}{6pt}
  \renewcommand{\arraystretch}{1.3}
  \newcommand{\Stack}[2]{\begin{tabular}[t]{@{}c@{}}\rule{0pt}{2.4ex}#1\\[2pt]#2\rule[-1ex]{0pt}{0pt}\end{tabular}}
  \resizebox{\linewidth}{!}{%
    \begin{tabular}{c|c|c}
      Failures         & Upper Bound                                                                                                               & Lower Bound                                                                                                                    \\
      \hline
      $k=1$            & $1$ bit~\cite{VWF26}                                                                                                      & $1$ bit~\cite{EK24}                                                                                                            \\
      \hline
      $k=2$            & \Stack{$\left\lceil \log\!\left(1+2\lfloor2(n-1)/3\rfloor\right)\right\rceil$ bits}{(Theorem~\ref{thm:2-resilient-path})} & \Stack{$ \left \lceil \log \lfloor n/3  \rfloor \right \rceil$ bits}{(Theorem~\ref{thm:lb-general-k})}                         \\
      \hline
      $k=3$            & \multirow{4}{*}{\Stack{$\left\lceil \log\!\binom{2n+k-4}{k-1}\right\rceil + 1$ bits}{(Theorem~\ref{thm:multi-failure})}}  & $\left \lceil \log (n-1) \right \rceil-1$ bits~\cite{KKI26}                                                                    \\
      \cline{1-1}\cline{3-3}
      $k \le (n-1)/4$  &                                                                                                                           & \Stack{$\left\lceil k_2 \log \left\lfloor (n-1+k_2)/(3k_2) \right\rfloor \right\rceil$ bits}{(Theorem~\ref{thm:lb-general-k})} \\
      \cline{1-1}\cline{3-3}
      $k \le 3(n-1)/8$ &                                                                                                                           & $\left\lceil k_3 \log \lfloor (n-1)/(2k_3)  \rfloor  \right\rceil$ bits~\cite{KKI26}                                           \\
      \cline{1-1}\cline{3-3}
      $3(n-1)/8<k$     &                                                                                                                           & $\left \lceil (n-1)/4 \right \rceil$ bits~\cite{KKI26}                                                                         \\
      \hline
    \end{tabular}%
  }
\end{table}

\subsection{Related Work}
There is a lot of research on failover routing in undirected networks.
Feigenbaum et al.~first presented a theoretical study on local failover routing~\cite{FGPSSS12}. They showed that, without header information, single-edge failures are always tolerable, but tolerating multiple failures is generally impossible.
Dai et al.~investigated the limitation by Feigenbaum et al.~in more detail. Precisely, they showed that the routing without rewriting packet headers is possible for two-edge failures, but impossible for three or more failures~\cite{dai2023tight}.
Chiesa et al.~proposed a routing method that can tolerate $k-1$ failures using $\log k$ bits in $k$-edge-connected graphs for any $k \leq 5$~\cite{chiesa2016resiliency}.
Foerster et al.~studied fault tolerance in two variants of the models in which vertices can and cannot identify the source of a packet. They showed that perfect fault tolerance is impossible for nonplanar graphs. They also proposed algorithms achieving perfect fault tolerance for all outerplanar graphs and related settings, as well as for nonouterplanar graphs where the destination is within two hops of the source~\cite{foerster2021feasibility}.
Dai et al.~classified link failures into three types: static (i.e., links which permanently and simultaneously fail), semi-dynamic (removing the assumption that links fail simultaneously), and dynamic (removing the assumption that links fail permanently), and examined the fault tolerance of failover routing for each type, clarifying its capabilities and limitations~\cite{dai2024dynamic}. As a result, they showed a routing method which tolerates $k-1$ dynamic link failures in $k$-edge-connected graphs for $k \leq 5$. Furthermore, they showed that this result can be extended to any $k$ by providing $\log k$ rewritable bits in the packet header. They also showed that rewriting $3$ bits suffices to cope with $k$ semi-dynamic failures. However, on general graphs, tolerating $2$ dynamic failures becomes impossible without rewritable bits. Even by rewriting $\log k$ bits, fault tolerance is impossible for $k$ dynamic failures.

\section{Preliminaries}
\subsection{Notation}

We work with a simple directed graph $G = (V, E), |V| = n$, where nodes represent routers and arcs represent communication links. Each node is assigned a unique identifier. Each arc is also assigned a unique identifier represented by the pair of its endpoints' IDs.
An arc from $u$ to $v$ is written $(u, v)$; $u$ is its \emph{tail} and $v$ its \emph{head}.
For a set $F \subseteq E$ of arcs, $G \setminus F$ denotes the subgraph $(V, E \setminus F)$; we abbreviate $G \setminus \{e\}$ as $G \setminus e$.
A \emph{directed path} from $u$ to $v$ is a sequence of vertices $(u = w_0, w_1, \ldots, w_\ell = v)$ such that $(w_i, w_{i+1}) \in E$ for each $0 \le i < \ell$.
Note that this path is not necessarily simple.
Vertex $v$ is \emph{reachable} from $u$ in $G$ if such a path exists.
An \emph{$s$--$t$ path} is a directed path from source $s$ to destination $t$.

For a (weighted) directed graph $H$ and vertices $u, v$, we write $\dist_H(u, v)$ for the shortest distance from $u$ to $v$ in $H$; we only use this notation for weighted graphs, in which the length of a path is the sum of the weights of its arcs.

For a vertex set $S \subseteq V$, let $\delta^+(S) = \{(u,v) \in E \mid u \in S, v \notin S\}$ denote the \emph{outgoing cut} of $S$.
If $S$ is a singleton $\{v\}$, we omit the braces and write $\delta^+(v)$.
An \emph{in-arborescence} rooted at $r$ is a spanning subgraph in which every vertex $v \ne r$ has exactly one outgoing arc and the unique path following these arcs from any $v$ leads to $r$.
Two arborescences are \emph{arc-disjoint} if they share no common arc.

\subsection{Model}
Given a directed graph $G=(V,E)$ and a failure parameter $k$, a local failover routing $R(G,k)$ consists of one \emph{local forwarding rule} for each vertex in $V$.
A local forwarding rule $R_v$ at $v$ is defined as the following function:
$$
  R_v:E \times 2^{|\delta^+(v)|} \times V \times V \times \{0,1\}^{\ast} \to E \times \{0,1\}^{\ast}.
$$
The rule $R_v$ determines how to process a packet arriving at the node. For $R_v(\mathit{in}, F_v, s, t, b) = (\mathit{out}, b')$, the given arguments and returned values mean:
\begin{itemize}
  \item $in$ : The incoming arc from which the packet arrived (or $\varepsilon$ if the packet is originated at the node).
  \item $F_v$ : The set of locally faulty outgoing arcs.
        Note that if the node $v$ sends a message through a faulty arc, it finds the arc failed. Then that arc is added to $F_v$.
  \item $s$ : The ID of the source node of the packet. This information is stored in the packet header.
  \item $t$ : The ID of the destination node of the packet. This information is stored in the packet header.
  \item $b$ : A rewritable bit string carried in the packet header.
  \item $out$ : The outgoing arc to which the packet is sent.
  \item $b'$ : The updated bit string carried in the header of the packet sent out.
\end{itemize}
Note that the IDs of $s$ and $t$ in the packet header are never rewritten.
When a packet arrives at a node other than its destination $t$, the node decides the neighboring node to which that packet is forwarded following the rule $R_v$. After that, it sends the packet to the next node. When a node fails to send a packet, it stores the faulty arc in $F_v$ and rewrites the packet header. Then it retries the packet forwarding according to the forwarding rules.
We model the directed local failover routing as a two-player game between player 1 (the designer) and player 2 (the adversary):
\begin{enumerate}
  \item The players are given a simple directed graph $G = (V, E)$ with $n = |V|$ nodes and a failure parameter $k$.
  \item Player 1 defines a set of local forwarding rules for each node, which must be able to route packets from a given source $s$ to a destination $t$ as long as an $s$--$t$ path exists in the current graph.
  \item Let $F$ be a subset of $E$ that contains at most $k$ arcs. Player 2 removes $F$ from $G$.
  \item For any $s$ and $t$ such that every node reachable from $s$ in $G \setminus F$ is reachable to $t$ in $G\setminus F$, verify if the pre-defined local forwarding rules can successfully guide the packet from $s$ to $t$ or not.
        If it succeeds for all those pairs, then player 1 wins. Otherwise, player 2 wins.
\end{enumerate}
A local failover routing scheme $\mathcal{A}$ is an algorithm that outputs a local failover routing for a given network $G$ and a threshold parameter $k$.
The efficiency of a local forwarding rule is measured by the number of rewritable bits used in designed forwarding rules (referred to as header size).
When the header size of any packet that can be sent in routing $R(G,k)$ is less than or equal to $B$, we say that the header size of $R(G, k)$ is at most $B$. For any $n$-node directed graph $G$ and the failure parameter $k$, if the header size of $R(G,k)$ outputted by the scheme $\mathcal{A}$ is bounded by a function $f(n,k)$, the header size of the scheme $\mathcal{A}$ is said to be $f(n,k)$.

\subsection{1-Resilient Scheme $\mathcal{A}_1$ by van den Akker et al.~\cite{VWF26}}\label{sec:one-failure}
In this section, we introduce a 1-resilient local failover routing scheme $\mathcal{A}_1$ with a single rewritable bit by van den Akker et al.~\cite{VWF26}.

Without loss of generality, assume every vertex $v \in V$ is reachable from $s$ and can reach $t$; vertices satisfying neither condition are irrelevant to $s$--$t$ routing and may be removed.
Under this assumption, every arc participates in at least one $s$--$t$ path.
We call an arc $(x, y) \in E$ a \emph{non-failable arc} if removing it makes $t$ unreachable from $x$, i.e., $G \setminus (x, y)$ has no directed path from $x$ to $t$; otherwise, we call $(x, y)$ a \emph{failable arc}.
The adversary cannot select a non-failable arc as the failure without disconnecting $s$ from $t$, which violates the reachability assumption.

\paragraph{Auxiliary graph and arc-disjoint arborescences.}
Construct a multigraph $G' = (V, E')$ by initialising $E' \gets E$ and inserting one additional copy of every non-failable arc.
We seek two arc-disjoint in-arborescences rooted at $t$ in $G'$, guaranteed by the following classical result.

\begin{theorem}[Edmonds~\cite{EDM73}]\label{thm:edmonds}
  For any directed graph $H = (V', E')$ and vertex $r \in V'$, the maximum number of arc-disjoint spanning in-arborescences rooted at $r$ equals
  $$\min_{\emptyset \neq S \subseteq V' \setminus \{r\}} |\delta^{+}(S)|.$$
\end{theorem}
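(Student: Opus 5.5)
The easy direction is the inequality ``$\le$''. Let $T$ be any spanning in-arborescence rooted at $r$ and let $\emptyset \ne S \subseteq V'\setminus\{r\}$. Pick any $v\in S$ and follow the unique $T$-path from $v$ to $r \notin S$. Some arc on this path must leave $S$, so $T$ contains an arc of $\delta^+(S)$. Arc-disjoint arborescences therefore use distinct arcs of $\delta^+(S)$, and their number is at most $|\delta^+(S)|$ for every such $S$.

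For ``$\ge$'', let $k=\min_S|\delta^+(S)|$. I would follow Lovász's inductive proof and induct on $k$, with $k=0$ trivial. It suffices to find a single spanning in-arborescence $T$ rooted at $r$ such that $H\setminus A(T)$ still satisfies $|\delta^+_{H\setminus A(T)}(S)|\ge k-1$ for all $\emptyset\ne S\subseteq V'\setminus\{r\}$. Induction then supplies $k-1$ further arc-disjoint arborescences.

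I would build $T$ greedily. Maintain a set $R\ni r$ and an in-arborescence $F$ on $R$ rooted at $r$, together with the invariant that, writing $H_F=H\setminus A(F)$, every $\emptyset\ne S\subseteq V'\setminus\{r\}$ satisfies $|\delta^+_{H_F}(S)|\ge k-1$. The invariant holds initially with $R=\{r\}$, since then no arcs are removed. While $R\ne V'$, I must find an arc $(u,v)$ with $u\notin R$ and $v\in R$ whose deletion preserves the invariant, and then add $u$ to $R$ and $(u,v)$ to $F$. Note that every set $S$ with $S\cap R=\emptyset$ still has out-degree $\ge k$ in $H_F$, because each arc of $F$ has its tail in $R$. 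Hence the only obstruction is the family of \emph{critical} sets $S$ with $|\delta^+_{H_F}(S)|=k-1$, and every critical set meets $R\setminus\{r\}$.

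The key step, and the main obstacle, is choosing the arc. I would call a set \emph{dangerous} if it is critical and $S\not\supseteq V'\setminus R$, meaning it has room for an arc from $V'\setminus R$. If there is no dangerous set, any arc from $V'\setminus R$ into $R$ works; such an arc exists because $V'\setminus R$ has out-degree $\ge k\ge 1$ and all of its outgoing arcs lie in $H_F$. Otherwise, take a maximal dangerous set $X$ and consider $Y=X\cup(V'\setminus R)\setminus$ nothing. More concretely, I would argue that $\delta^+_{H_F}((V'\setminus R)\setminus X)$ contains an arc $(u,v)$ with $u\in (V'\setminus R)\setminus X$ and $v\in R\setminus X$. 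If every outgoing arc of $(V'\setminus R)\setminus X$ went into $X$ or had its tail inside $X$, then the union $X\cup(V'\setminus R)$ would have out-degree at most $k-1$ in $H_F$, and I would derive a contradiction from the fact that this set is disjoint from $r$.

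Next I would show that deleting $(u,v)$ keeps every critical set intact. Suppose a critical set $Z$ contains $u$ but not $v$. Submodularity of $|\delta^+(\cdot)|$, in the form $d(X)+d(Z)\ge d(X\cup Z)+d(X\cap Z)$, together with the lower bound $k-1$ on both right-hand terms, forces $X\cup Z$ to be critical. It is also dangerous, because it misses $v\in R$ and $u\notin X$, which contradicts the maximality of $X$.

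Getting these case distinctions exactly right is the delicate part. In particular, I have to handle $X\cap Z=\emptyset$, where the bound on $d(X\cap Z)$ does not apply, and I have to make sure that $u$ lies outside $X$ and that $v$ lies outside $Z$. Finally, for the in-arborescence orientation I would simply note that reversing all arcs converts the statement into the usual out-arborescence form.
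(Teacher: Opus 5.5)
Your easy direction, the reduction to finding a single arborescence $T$ with $|\delta^+_{H\setminus A(T)}(S)|\ge k-1$, and the greedy invariant (including the observation that every critical set meets $R\setminus\{r\}$) are sound. The paper gives no proof of this theorem and simply cites Edmonds. The gap is in the arc-selection step, where the orientation has been mixed up.

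Deleting a candidate arc $(u,v)$ with $u\notin R$, $v\in R$ lowers $|\delta^+_{H_F}(S)|$ exactly when $u\in S$ and $v\notin S$. Since $r\in R\setminus S$ always, the threatened critical sets are those that \emph{meet} $V'\setminus R$. Your notion of ``dangerous'' ($S\not\supseteq V'\setminus R$) does the opposite on both sides. It keeps harmless critical sets lying inside $R\setminus\{r\}$. It discards exactly the critical sets containing all of $V'\setminus R$, and these are pushed below $k-1$ by every candidate arc with head in $R\setminus S$ (e.g.\ any arc into $r$). Correspondingly, your final claim that $X\cup Z$ is dangerous ``because it misses $v$'' does not follow from your own definition.

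Here is a concrete failure. Let $V'=\{r,a,b\}$ with arcs $(a,r),(a,b),(b,a),(b,r)$, so $k=2$. After the legal first step $R=\{r,a\}$, $F=\{(a,r)\}$, the critical sets are $\{a\}$ and $\{a,b\}$. Your only dangerous set is $X=\{a\}$. The only arc from $(V'\setminus R)\setminus X=\{b\}$ to $R\setminus X=\{r\}$ is $(b,r)$, and deleting it leaves $\{a,b\}$ with no outgoing arc at all.

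Your existence argument is also incomplete. The invariant only gives $X\cup(V'\setminus R)$ out-degree at least $k-1$ in $H_F$, so ``at most $k-1$'' yields no contradiction.

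The repair is to reverse the extremality and take the arc inside $X$.
\begin{itemize}
\item If no critical set meets $V'\setminus R$, any arc from $V'\setminus R$ into $R$ works, as you say.
\item Otherwise, let $X$ be an inclusion-wise \emph{minimal} critical set with $X\setminus R\neq\emptyset$.
\item Since $X\setminus R$ avoids $R$, no arc of $F$ leaves it. Hence $|\delta^+_{H_F}(X\setminus R)|=|\delta^+_H(X\setminus R)|\ge k$, while at most $k-1$ of these arcs leave $X$. So some arc $(u,v)$ of $H_F$ has $u\in X\setminus R$ and $v\in X\cap R$.
\item Suppose a critical $Z$ had $u\in Z$ and $v\notin Z$. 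Then $u\in X\cap Z$, so $X\cap Z\neq\emptyset$, which removes your worry about the disjoint case.
\item Submodularity, with both $|\delta^+_{H_F}(X\cap Z)|$ and $|\delta^+_{H_F}(X\cup Z)|$ at least $k-1$, forces $X\cap Z$ to be critical. It contains $u\notin R$ but misses $v\in X$, contradicting the minimality of $X$.
\end{itemize}
In the example this rule selects $(b,a)$. That gives $T=\{(a,r),(b,a)\}$ and leaves the second arborescence $\{(a,b),(b,r)\}$.
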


The following lemma verifies that $G'$ satisfies this condition with $r = t$.

\begin{lemma}\label{lem:min-cut-two}
  In $G'$, $\displaystyle\min_{\emptyset \neq S \subseteq V \setminus \{t\}} |\delta^{+}(S)| \ge 2$.
\end{lemma}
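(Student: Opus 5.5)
Fix a nonempty $S \subseteq V \setminus \{t\}$. We show that $G'$ has at least two arcs leaving $S$, counting multiplicity. The argument uses the standing assumption that every vertex is reachable from $s$ and can reach $t$. Since $t \notin S$ and every $x \in S$ reaches $t$ in $G$, some directed $x$--$t$ path must exit $S$. Hence $\delta^+_G(S) \neq \emptyset$, and because $E \subseteq E'$, also $|\delta^+_{G'}(S)| \ge 1$. It remains to rule out $|\delta^+_{G'}(S)| = 1$.

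We split on the arcs of $\delta^+_G(S)$. If $|\delta^+_G(S)| \ge 2$, we are done, since every arc of $G$ survives in $G'$. So assume $\delta^+_G(S) = \{(x,y)\}$ for a single arc with $x \in S$, $y \notin S$.

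The key step is to show that $(x,y)$ is non-failable. Suppose $G \setminus (x,y)$ contained a directed path from $x$ to $t$. It starts in $S$ and ends at $t \notin S$, so it must use some arc of $\delta^+_G(S)$. That cut consists only of $(x,y)$, which has been removed, a contradiction. Hence $x$ cannot reach $t$ in $G \setminus (x,y)$, so $(x,y)$ is non-failable by definition. The construction of $G'$ then adds a second copy of $(x,y)$, and both copies leave $S$, giving $|\delta^+_{G'}(S)| \ge 2$.

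The only delicate point is to apply the non-failability definition with the tail $x$ of the cut arc, not an arbitrary vertex of $S$. This works because the definition is phrased in terms of reachability from the tail. Since every path from $S$ to $t$ must cross $\delta^+(S)$, the single-cut-arc case gives this directly. Taking the minimum over all $S$ yields the lemma, which together with Theorem~\ref{thm:edmonds} provides the two arc-disjoint in-arborescences rooted at $t$.
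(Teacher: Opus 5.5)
Your proof is correct and follows essentially the same argument as the paper: since every vertex of $S$ reaches $t$, some arc leaves $S$, and if only one arc $(x,y)$ leaves $S$ in $G$ then it is non-failable, so $G'$ contains a second copy of it. The only differences are cosmetic: you case-split directly on $|\delta^+_G(S)|$ instead of arguing by contradiction from $|\delta^+_{G'}(S)|=1$, and you spell out why removing the sole cut arc disconnects $x$ from $t$, a step the paper only states.
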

\begin{proof}
  Fix any non-empty $S \subseteq V \setminus \{t\}$.
  Since every vertex in $S$ can reach $t$, we have $|\delta^{+}(S)| \ge 1$.
  Suppose for contradiction that $|\delta^{+}(S)| = 1$, and let $(x, y)$ (with $x \in S$, $y \notin S$) be the unique arc in $\delta^{+}(S)$.
  Then $(x, y)$ is also the unique arc from $S$ to $V \setminus S$ in $G$, so its removal disconnects $x$ from $t$; hence $(x, y)$ is a non-failable arc.
  By construction, $G'$ contains two copies of $(x, y)$, so $|\delta^{+}(S)| \ge 2$, a contradiction. \qed
\end{proof}

By Lemma~\ref{lem:min-cut-two} and Theorem~\ref{thm:edmonds}, $G'$ admits two arc-disjoint in-arborescences $T_0$, $T_1$ rooted at $t$.
These trees may share non-failable arcs in $G$ (duplicate copies in $G'$ can be assigned to different trees), but this is harmless since non-failable arcs are never failed.

\paragraph{Scheme $\mathcal{A}_1$.}

Given $G$, $s$, and $t$, we first remove every vertex not reachable from $s$ or unable to reach $t$, then compute $G'$ and extract two arc-disjoint in-arborescences $T_0$, $T_1$ rooted at $t$ as described above.
For each vertex $v \ne t$, we record the outgoing arc of $v$ in $T_0$ as $\mathrm{next}_0(v)$ and the outgoing arc of $v$ in $T_1$ as $\mathrm{next}_1(v)$, yielding routing tables $\mathrm{next}_0, \mathrm{next}_1 : V \setminus \{t\} \to E$.

Each packet carries a single rewritable bit $b$, initialised to $0$ at source $s$.
At each intermediate vertex $v \ne t$ with local failure view $F_v \subseteq E$, the forwarding rule is as follows: if $b = 0$ and $\mathrm{next}_0(v) \notin F_v$, the packet is forwarded along $\mathrm{next}_0(v)$ with $b$ left at $0$; otherwise (i.e., $b = 1$, or $b = 0$ but $\mathrm{next}_0(v) \in F_v$), the packet is forwarded along $\mathrm{next}_1(v)$ and $b$ is set to $1$.

\begin{theorem}\label{thm:one-failure}
  The scheme $\mathcal{A}_1$ is a $1$-resilient local failover routing scheme using at most $1$ rewritable bit in the packet header.
\end{theorem}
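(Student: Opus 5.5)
The proof has two parts. First, the header size is trivially one bit. Second, for any failure set $F$ with $|F|\le 1$ such that $t$ stays reachable from everything reachable from $s$, the packet reaches $t$. Throughout we work on the pruned graph, in which every vertex is reachable from $s$ and can reach $t$. The argument splits on the single failed arc $e$, if any, and relative to $T_0$ and $T_1$.

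\textbf{Case 1: $F=\emptyset$, or $e$ is not used by $T_0$.} Then $\mathrm{next}_0(v)\notin F_v$ at every vertex. Starting from $s$ with $b=0$, the packet follows the unique $T_0$-path from $s$ to $t$ and never flips $b$. Since $T_0$ is an in-arborescence rooted at $t$, the packet reaches $t$ in finitely many steps.

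\textbf{Case 2: $e=\mathrm{next}_0(x)$ for some $x$.} First, $e$ is failable: the adversary may not fail a non-failable arc, since every vertex is reachable from $s$ and disconnecting $x$ from $t$ would violate the game's condition. The packet follows $T_0$ with $b=0$. If it never reaches $x$, it arrives at $t$ as in Case 1. Otherwise, at $x$ it detects the failure, so $e\in F_x$. It then sets $b=1$ and from then on always uses $\mathrm{next}_1$. Once $b=1$ the rule never inspects $\mathrm{next}_0$ again, so the packet follows the $T_1$-path from $x$ to $t$. It remains to check that no arc on this path is failed. The only failed arc is $e$, and $e$ belongs to $T_0$. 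The key point is that $e$ cannot also be an arc of $T_1$. $T_0$ and $T_1$ are arc-disjoint in $G'$, and the only arcs of $G$ that can appear in both are non-failable arcs, through their two copies. A failable arc has just one copy in $G'$, so it lies in at most one tree. Hence $e\notin T_1$, and the $T_1$-path from $x$ reaches $t$ intact.

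One detail concerns what happens after the switch: with $b=1$ the packet may revisit vertices it already passed, or even $x$ itself. This causes no looping, because the next arc now depends only on the current vertex, through $\mathrm{next}_1$. The route is therefore exactly the tree path in $T_1$, which is acyclic and ends at $t$.

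The step that needs the most care is the shared-arc subtlety in Case 2: showing that the unique failed arc cannot be an arc of both trees. This is exactly where the construction of $G'$ comes in. It duplicates only non-failable arcs, and Lemma~\ref{lem:min-cut-two} together with Theorem~\ref{thm:edmonds} supply the two arc-disjoint in-arborescences $T_0$ and $T_1$ in $G'$.
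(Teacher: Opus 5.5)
Your proof is correct and follows the paper's argument. It uses the same two-case split on whether the failed arc lies in $T_0$, and the same key observation: a failable arc has only one copy in $G'$, so it cannot belong to both arc-disjoint trees. Your extra details (why the failed arc must be failable, the packet possibly never reaching $x$, and no looping after the switch to $T_1$) spell out steps the paper leaves implicit.
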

\begin{proof}
  At most one arc $(x, y)$ fails.
  Since $T_0$ and $T_1$ are arc-disjoint in $G'$, and any arc shared between them in $G$ is a non-failable arc that cannot fail, the failed arc belongs to at most one of $T_0$, $T_1$ in $G$.

  \textbf{Case 1:} $(x, y) \notin T_0$.
  The packet follows $T_0$ from $s$ to $t$ with $b = 0$ throughout, reaching $t$ without interruption.

  \textbf{Case 2:} $(x, y) \in T_0$.
  The packet follows $T_0$ until reaching $x$, detects the failure of $(x, y)$, sets $b \gets 1$, and continues along $T_1$ from $x$ to $t$.
  Since $(x, y) \notin T_1$ and no further arc fails, the packet reaches $t$ successfully.

  In both cases, the packet is delivered to $t$ in at most $2n-2$ hops. \qed
\end{proof}

\subsection{$k$-Resilient Scheme $\mathcal{B}_k$ by Kawashima et al.~\cite{KKI26}}\label{sec:kki-multi}
Here, we introduce the multi-failure scheme $\mathcal{B}_k$ of Kawashima et al.~\cite{KKI26} needed for Section~\ref{sec:multi-failure}.
The scheme repeatedly attempts to forward the packet toward $t$ along arcs in $G$. The outcome of each distinct arc attempted so far is recorded in a \emph{failure profile} $c$, represented as a bit string with one bit per arc. If an attempted arc is faulty, the scheme tries another arc.

\paragraph{Scheme $\mathcal{B}_k$.}
The scheme $\mathcal{B}_k$ repeatedly attempts to route the packet from its current vertex toward $t$. Each attempted arc either succeeds, in which case the packet advances to its head, or is detected as faulty at its tail, which then tries another arc. Because all forwarding decisions are predetermined using complete knowledge of $G$, the header need only enable each vertex to reconstruct which arcs have already been attempted and the outcome of each attempt. The vertex can then deterministically recompute the next arc to try. 
More precisely, let $e_1, e_2, \ldots$ denote the sequence of attempted arcs, including both successful and failed attempts. An arc may appear more than once if the routing process returns to its tail. Let $\tau$ be the subsequence consisting of the first occurrence of each distinct arc. For each entry of $\tau$, 
the header stores one bit indicating the outcome of the corresponding attempt: $1$ if the arc was faulty and $0$ if it succeeded. We call the resulting bit string the \emph{failure profile} $c$.
Using $c$, together with $G$ and the predetermined forwarding rules, a vertex $v$ can reconstruct $\tau$ and hence the set $F'$ of arcs already found to be faulty. Because the set of faulty arcs remains fixed throughout the routing of a packet, an arc that has already succeeded cannot fail later. Accordingly, $v$ chooses, within $G \setminus F'$, an arc that reuses as many previously successful arcs as possible, thereby minimising the number of new entries added to $\tau$.

The following lemma bounds the length of $\tau$, and consequently the length of the failure profile $c$.

\begin{lemma}[\cite{KKI26}]\label{lem:kki-attempts}
  For any $k$, when $\mathcal{B}_k$ tolerates at most $k$ arc failures, at most $2n+k-3$ distinct arcs are ever attempted while routing a packet from $s$ to $t$, i.e., $|\tau| \le 2n+k-3$.
\end{lemma}
\begin{proof}[Sketch]
  Every attempted arc is of one of three types: (i) it advances the packet to a vertex not yet visited, which can happen at most $n-1$ times; (ii) it reroutes the packet to an already-visited vertex, closing a directed cycle consisting of previously-succeeded arcs, and contracting this cycle strictly decreases the number of remaining vertices, so this can happen at most $n-2$ times; or (iii) it is found faulty, which can happen at most $k$ times. Summing the three bounds gives $2n+k-3$.
\end{proof}

\begin{theorem}[\cite{KKI26}]\label{thm:kki-multi}
  For any $k$, the scheme $\mathcal{B}_k$ is a $k$-resilient local failover routing scheme using at most $\left\lceil \log \binom{2n+k-3}{k} \right\rceil$ rewritable bits in the packet header.
\end{theorem}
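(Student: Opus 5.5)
We must prove Theorem~\ref{thm:kki-multi}: correctness of $\mathcal{B}_k$ and the bound $\lceil \log \binom{2n+k-3}{k}\rceil$ on the header size. The plan has two parts: first show the scheme always delivers the packet, then show that the set of failure profiles that can actually occur has at most $\binom{2n+k-3}{k}$ elements, so a fixed-width index into this set suffices.

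\textbf{Correctness.} The invariant is that at the current vertex $v$ the header determines $\tau$ and hence the set $F'$ of arcs already found faulty. This holds because the rule is deterministic given $G$, $s$, $t$ and the profile prefix: replaying the rules from $s$ with the recorded outcomes reconstructs every attempt in order. Since $F' \subseteq F$ and $G\setminus F$ contains a path from $v$ to $t$ (by the reachability assumption on $s$), $G\setminus F'$ also contains one, so $v$ always has a candidate arc. The rule picks the arc lying on a path to $t$ in $G\setminus F'$ that reuses as many known-successful arcs as possible. Lemma~\ref{lem:kki-attempts} bounds the number of distinct attempted arcs by $2n+k-3$. Because reused arcs never fail and the rule is deterministic, the walk over known-good arcs cannot loop forever: each choice strictly progresses either by adding a new entry to $\tau$ or by moving along a fixed shortest-path structure in the contracted known-good graph toward $t$. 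Hence the packet reaches $t$.

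\textbf{Header size.} Every profile $c$ is a binary string of length at most $2n+k-3$ with at most $k$ ones. The key step is to show that, for a fixed $(s,t)$, the \emph{realisable} profiles are in bijection with the $k$-subsets of positions in $[2n+k-3]$. I would argue this by padding:
\begin{itemize}
\item if fewer than $k$ failures were seen, conceptually append ones after the final position;
\item alternatively, encode the positions of the failures among the first $2n+k-3$ attempt slots.
\end{itemize}
Either way, the length of $c$ is determined by replaying the deterministic process, so the positions of the ones determine the whole string. This gives at most $\sum_{j\le k}\binom{2n+k-3}{j}$ profiles, which is too many for the target bound.

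To reduce this to $\binom{2n+k-3}{k}$, I would use a stars-and-bars encoding. A run with $j<k$ failures terminates after some length $L \le 2n+j-3$; padding it with $k-j$ virtual failures placed at the end yields a distinct $k$-subset of $[2n+k-3]$. Distinctness follows because the replay stops exactly at $t$, so the real part of the profile is recoverable from the padded subset.

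\textbf{Main obstacle.} The hard step is making this injection rigorous, and in particular handling the mid-route header. An intermediate node must read the current prefix without knowing the final length. I would therefore store the index of the final padded subset, update it consistently as attempts occur, and choose the padding convention so that a prefix read is unambiguous. The fact to establish is that the realisable prefixes form a prefix-free family, checked against the bound of Lemma~\ref{lem:kki-attempts}. Taking $\lceil\log\cdot\rceil$ of the count then gives the stated number of bits.
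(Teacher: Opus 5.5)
Your count of complete runs (pad a finished run with $j$ real failures and length $L\le 2n+j-3$ by $k-j$ virtual failures after position $L$) is sound and matches the paper's sketch in spirit. However, the step you flag as the main obstacle is a genuine gap, and your proposed fix does not close it. The header has to encode the routing state at every intermediate moment, not the final profile. You cannot store ``the index of the final padded subset'', because that subset depends on failures the packet has not yet met. Also, the realisable prefixes are closed under taking prefixes, so they are not a prefix-free family; only the complete profiles are. If the header instead stores the current prefix of $c$ verbatim, including the successes since the most recent failure, there are too many states. Already for $k=1$, the failure position and the length of the success run after it vary independently. On suitable graphs this gives $\Theta(n^2)$ distinct prefixes, against a budget of $\binom{2n-2}{1}=2n-2$.

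The missing idea is that successes after the most recent failure need not be recorded at all; this is what the paper's remark about padding trailing zero entries of $c$ is getting at. Given the failures found so far, the rest of the route is deterministic. Fix it to be a simple path from the tail of the last failed arc to $t$ in $G\setminus F'$ that minimises new arcs. Then the current vertex locates itself on that path from its own identity, and on detecting a new failure it can compute the updated record. So the header only needs $0\le z_1\le\cdots\le z_j$ with $j\le k$, where $z_i$ is the number of successful distinct attempts before the $i$-th failure. Since $t$ is still unvisited when any failure occurs, the counting of Lemma~\ref{lem:kki-attempts} gives $z_j\le (n-2)+(n-2)=2n-4$. Now map the record to the string $0^{z_1}\,1\,0^{z_2-z_1}\,1\cdots 0^{z_j-z_{j-1}}\,1\,0^{2n-3-z_j}\,1^{k-j}$. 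It has length $2n+k-3$ and exactly $k$ ones, and the map is a bijection onto all such strings. Decoding is unambiguous because the virtual ones are exactly those after the last zero, and $z_j<2n-3$. Hence at every moment there are at most $\binom{2n+k-3}{k}$ header states; equivalently, $\sum_{j=0}^{k}\binom{2n-4+j}{j}=\binom{2n+k-3}{k}$. Your correctness invariant should then be restated: the header together with the current vertex determines $F'$ and the current plan, rather than the header alone determining $\tau$.
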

\begin{proof}[Sketch]
  By Lemma~\ref{lem:kki-attempts}, the failure profile $c$ has length at most $2n+k-3$ and, since at most $k$ arcs ever fail, contains at most $k$ ones. As the packet is delivered once it has encountered its last permitted failure, any trailing zero entries of $c$ may be padded to one without affecting correctness, so it suffices to distinguish failure profiles containing exactly $k$ ones among $2n+k-3$ positions; there are $\binom{2n+k-3}{k}$ such profiles, encodable in $\left\lceil \log \binom{2n+k-3}{k} \right\rceil$ bits.
\end{proof}

\section{Our Algorithms}
\subsection{An Improved Scheme for Multi Failures}\label{sec:multi-failure}

In this section, we present a local failover routing scheme $\mathcal{C}_k$ tolerating at most $k$ arc failures that improves upon the multi-failure scheme $\mathcal{B}_{k}$ of~\cite{KKI26} by using the multi-failure scheme $\mathcal{B}_{k-1}$ together with the single failure scheme $\mathcal{A}_1$ of~\cite{VWF26}.
By Theorem~\ref{thm:kki-multi}, the scheme $\mathcal{B}_k$ encodes $k$ failures among at most $2n+k-3$ arc attempts, requiring $\lceil \log \binom{2n+k-3}{k} \rceil$ bits.
By Lemma~\ref{lem:kki-attempts} applied with failure parameter $k-1$, $\mathcal{B}_{k-1}$ only ever needs to consider at most $2n+k-4$ arc attempts, recording which $k-1$ of them have failed.
Handling the remaining, $k$-th failure separately via the $1$-bit mechanism of $\mathcal{A}_1$ then yields an improvement of roughly $\log n$ bits over $\mathcal{B}_k$.

\paragraph{Scheme $\mathcal{C}_k$.}
The header carries a failure-profile string $c$ (initialised to $\mathbf{0}$) recording, among the at most $2n+k-4$ arcs that $\mathcal{B}_{k-1}$ may attempt, which ones have already failed, together with a bit $b_0 \in \{0,1\}$ (initialised to $0$).
For each of the $\binom{2n+k-4}{k-1}$ possible failure profiles $c$ with exactly $k-1$ ones, let $G_c$ denote the graph $G$ with the $k-1$ arcs recorded in $c$ removed, and, as in the construction of $\mathcal{A}_1$~(Section~\ref{sec:one-failure}), extract in-arborescences $T_0^c, T_1^c$ rooted at $t$ in $G_c$ that are arc-disjoint except on arcs that player 2 cannot remove.
At each vertex $v \ne t$, the forwarding rule proceeds in two phases: while $c$ has fewer than $k-1$ ones (Phase~1), the forwarding rule of $\mathcal{B}_{k-1}$ is applied with failure profile $c$; once $c$ reaches $k-1$ ones (Phase~2), the packet is forwarded along the arc of $T_0^c$ leaving $v$, and if that arc has failed (the $k$-th failure), $b_0$ is set to $1$ and the packet is forwarded along $T_1^c$ instead.

\begin{theorem}\label{thm:multi-failure}
  The scheme $\mathcal{C}_k$ is a $k$-resilient local failover routing scheme using at most $1+\left\lceil\log\!\binom{2n+k-4}{k-1}\right\rceil$ rewritable bits in the packet header.
\end{theorem}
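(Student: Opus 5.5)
We argue correctness and header size separately, splitting correctness by the number $f \le k$ of failed arcs the packet actually encounters.

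\textbf{Phase~1.} While $c$ has fewer than $k-1$ ones, the packet runs exactly as in $\mathcal{B}_{k-1}$ with profile $c$. We note that $\mathcal{B}_{k-1}$ never needs the bound $k-1$ until the $(k-1)$-th failure is detected; up to then its behaviour depends only on the detected failures. So in Phase~1 the whole argument of Theorem~\ref{thm:kki-multi} carries over, and Lemma~\ref{lem:kki-attempts} with parameter $k-1$ bounds the number of distinct attempted arcs by $2n+k-4$. If the packet reaches $t$ before the $(k-1)$-th failure is detected, we are done by the correctness of $\mathcal{B}_{k-1}$. For the same reason, the delivery-time padding argument of Theorem~\ref{thm:kki-multi} applies: trailing zero positions can be filled with ones. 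Hence every reachable header state in Phase~1 is a prefix-determined profile, encodable among the $\binom{2n+k-4}{k-1}$ profiles with $k-1$ ones.

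\textbf{Transition and Phase~2.} At the moment the $(k-1)$-th failure is detected at a vertex $v$, $c$ has exactly $k-1$ ones. These identify a set $F_c$ of $k-1$ genuinely failed arcs, and at most one further failed arc $e^*$ remains in $G_c = G \setminus F_c$. We now apply Theorem~\ref{thm:one-failure} to $G_c$, with $s$ replaced by the current vertex $v$. For this we must check that the reduction in Section~\ref{sec:one-failure} is legitimate. Since $s$ reaches only vertices that can reach $t$ in $G\setminus F$, and the packet so far travelled on non-failed arcs, $v$ is reachable from $s$ in $G\setminus F$. Hence every vertex reachable from $v$ in $G_c\setminus e^*$ can reach $t$.

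The subtle point is that $T_0^c, T_1^c$ are computed for $G_c$ independently of $v$; they are built by pruning to vertices that can reach $t$ and duplicating non-failable arcs. I would define the arborescences on the subgraph of $G_c$ induced by vertices that can reach $t$ in $G_c$. Lemma~\ref{lem:min-cut-two} then applies verbatim, without using the reachability-from-$s$ pruning, since it only used that every vertex reaches $t$. The current vertex $v$ lies in this subgraph because it can reach $t$ in $G\setminus F \subseteq G_c$. Cases~1 and~2 of Theorem~\ref{thm:one-failure} then deliver the packet, using only the bit $b_0$. We must also check that a failed arc $e^*$ shared by $T_0^c$ and $T_1^c$ must be non-failable in $G_c$. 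If it were failable, removing it would disconnect some vertex from $t$ in $G_c\setminus e^*$. The packet, however, only traverses vertices reachable from $v$ in $G\setminus F$, all of which reach $t$. So the packet never needs to use such an arc, or it lies on the tree paths only at vertices that cannot occur.

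\textbf{Main obstacle.} Making this last point rigorous is the step I expect to be hardest: a vertex disconnected by $e^*$ might still lie on the $T_0^c$-path from $v$. I would try to show that if $e^*=(x,y)$ is non-failable in $G_c$ then no vertex on the walk can need to bypass it, since $x$ cannot then lie on the walk.

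\textbf{Header size.} During Phase~1, $b_0=0$ and $c$ ranges over the padded profiles. In Phase~2, $c$ is frozen with $k-1$ ones and $b_0\in\{0,1\}$. Thus the header is a pair consisting of a profile index among $\binom{2n+k-4}{k-1}$ and one bit, giving at most $1+\lceil\log\binom{2n+k-4}{k-1}\rceil$ bits.
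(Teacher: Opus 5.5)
Your route is the paper's own. You run $\mathcal{B}_{k-1}$ until $k-1$ failures are recorded, then run $\mathcal{A}_1$ on $G_c$ from the current vertex $v$, and count $\binom{2n+k-4}{k-1}$ profiles times the single bit $b_0$; the paper's proof is only a short sketch of exactly this.

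The step you leave open, the one you call the main obstacle, is not hard, and you already wrote down the fact that settles it. Let $e^*=(x,y)$ be the remaining failed arc, if there is one.
\begin{itemize}
\item If the packet ever stands at $x$, it got there over healthy arcs. So $x$ is reachable from $s$ in $G\setminus F$, and by the model's hypothesis it reaches $t$ in $G\setminus F\subseteq G_c\setminus e^*$.
\item That $x$--$t$ path uses only vertices that reach $t$ in $G_c$. Hence $e^*$ is failable in your pruned $G_c$, is not a duplicated arc, and lies in at most one of $T_0^c,T_1^c$.
\item If the packet never reaches $x$, then $e^*$ is never tried.
\end{itemize}
Either way, Cases~1 and~2 of Theorem~\ref{thm:one-failure} go through.

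Your worry that some vertex cut off by $e^*$ might lie on the $T_0^c$-path from $v$ is moot. Every vertex the packet actually visits is reached over healthy arcs, so it reaches $t$ in $G\setminus F$. Only the tail $x$ ever has to bypass $e^*$. Also, your sentence ``If it were failable, removing it would disconnect some vertex from $t$'' is backwards: it is non-failability that means removing $e^*$ cuts $x$ off from $t$.

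You also did not need to drop the reachability-from-$s$ pruning. Pruning $G_c$ to the vertices that are reachable from $s$ and can reach $t$ still keeps $v$, because $v$ is reachable from $s$ in $G\setminus F\subseteq G_c$. With that pruning, a duplicated arc $(x,y)$ can never be the extra failure. If it were, then $F=F_c\cup\{(x,y)\}$ and $x$ would still be reachable from $s$, since a simple $s$--$x$ path never uses $(x,y)$. But $x$ could no longer reach $t$, violating the hypothesis. This is what the paper's phrase ``arc-disjoint except on arcs that player 2 cannot remove'' encodes.

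Finally, in the header count, say explicitly that the phase is read off $c$. Padded Phase~1 profiles also have $k-1$ ones, so they must not coincide with genuine $(k-1)$-failure profiles; otherwise a Phase~1 state with $b_0=0$ would clash with a Phase~2 state. This distinguishability is inherited from $\mathcal{B}_{k-1}$, whose rule must already recover the detected failure set from $c$.
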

\begin{proof}
  The basic proof follows the same argument as in~\cite{VWF26,KKI26}.
  The bound follows from the fact that the number of states that need to be remembered by Phase~1 and Phase~2, respectively, is $\binom{2n+k-4}{k-1}$ and $1$: Phase~1 must identify which of the $\binom{2n+k-4}{k-1}$ possible failure profiles $c$ has occurred, while Phase~2, once $c$ is fixed, needs only the single bit $b_0$ of $\mathcal{A}_1$ to choose between $T_0^c$ and $T_1^c$. \qed
\end{proof}

\paragraph{Improvement.}
$\mathcal{C}_k$ improves upon the $\lceil\log\binom{2n+k-3}{k}\rceil$ bits required by $\mathcal{B}_k$ for all $k < 2n-3$.
By Pascal's rule, $\binom{2n+k-3}{k}=\binom{2n+k-4}{k}+\binom{2n+k-4}{k-1}$.
For $k < 2n-3$, we have $\binom{2n+k-4}{k}/\binom{2n+k-4}{k-1}=(2n-3)/k>1$, so $\binom{2n+k-3}{k} \;>\; 2\binom{2n+k-4}{k-1}$.

\subsection{More Efficient 2-Resilient Scheme}

We now show that, for $k=2$, restricting $\mathcal{B}_1$ in $\mathcal{C}_2$ to the arcs of a suitably chosen $s$--$t$ path with good properties is enough to preserve $2$-resiliency, yielding a scheme $\mathcal{D}$ that further improves $\mathcal{C}_2$'s bound to $\left\lceil \log\!\left(1+2\lfloor 2(n-1)/3 \rfloor\right)\right\rceil$ bits.

As in Section~\ref{sec:one-failure}, extract two in-arborescences $T_0, T_1$ rooted at $t$ in $G$
that are arc-disjoint except for non-failable arcs.
Let $G^\ast$ be the multigraph on $V$ consisting of $T_0$ and $T_1$'s arcs kept as separate copies.
Weight each arc of $G^\ast$ by $0$ if it is one of such a shared (non-failable) pair and by $1$ otherwise (failable).
We compute a weighted shortest $s$--$t$ path $P = (u_0, u_1, \ldots, u_h)$ in $G^\ast$.
Then, the following lemma holds for $P$.

\begin{lemma}\label{lem:two-tree-shortcut}
  The number of failable arcs of $P$ is at most $\lfloor 2(n-1)/3 \rfloor$.
\end{lemma}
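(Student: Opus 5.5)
The approach is a level-counting argument on the weighted distances to $t$ in $G^\ast$. For every vertex $v$ let $D(v)=\dist_{G^\ast}(v,t)$, and put $D=D(s)$. Since $P$ is a weighted shortest $s$--$t$ path and failable arcs have weight $1$, the number of failable arcs of $P$ equals $D$. So it suffices to show $D\le 2(n-1)/3$, i.e.\ $n-1\ge 3D/2$; integrality of the count then gives the floor.

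For $j\ge 0$ set $L_j=\{v: D(v)=j\}$ and $S_j=\{v : D(v)\ge j\}$. Along $P$ the value $D(\cdot)$ decreases by exactly the weight of each arc. Hence every level $L_0,L_1,\dots,L_D$ is nonempty, and $t\in L_0$.

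\textbf{Key claim.} For every $0\le j<D$, we have $|L_j|+|L_{j+1}|\ge 3$.

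Suppose instead that $L_j=\{a\}$ and $L_{j+1}=\{b\}$. Each of $T_0,T_1$ is an in-arborescence rooted at $t\notin S_{j+1}$, so each contains an arc leaving $S_{j+1}$. Let $(x,y)$ be such an arc, with $D(x)\ge j+1$ and $D(y)\le j$. The triangle inequality gives $D(x)\le w(x,y)+D(y)\le 1+j$. This forces $D(x)=j+1$, $D(y)=j$ and $w(x,y)=1$. Hence $x=b$ and $y=a$.

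So both $T_0$ and $T_1$ use an arc $(b,a)$. Because $G$ is simple, this is the same arc of $G$ appearing in both trees. By the construction of $T_0,T_1$, such a shared arc is non-failable and therefore has weight $0$ in $G^\ast$. This contradicts $w(b,a)=1$ and proves the claim. (For $j=0$ the same argument applies with exits into $L_0\ni t$.)

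\textbf{Counting.} Partition the levels into consecutive pairs and use the claim on each pair.
\begin{itemize}
\item If $D$ is odd, use the pairs $(L_0,L_1),(L_2,L_3),\dots,(L_{D-1},L_D)$. This gives $n\ge 3(D+1)/2$, so $n-1\ge (3D+1)/2\ge 3D/2$.
\item If $D$ is even, use the pairs $(L_1,L_2),\dots,(L_{D-1},L_D)$, and note these levels exclude $L_0\ni t$. This gives $n-1\ge 3D/2$.
\end{itemize}
In both cases $D\le 2(n-1)/3$, hence $D\le\lfloor 2(n-1)/3\rfloor$.

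The main obstacle is the key claim. Its crux is that the two exit arcs of $S_{j+1}$, one forced from each tree, must coincide as an arc of the simple graph $G$. This uses two facts: a weight-$1$ arc lowers $D$ by at most one, and shared tree arcs were given weight $0$. The rest is bookkeeping, with care needed only at the boundary level containing $t$ and in the parity of $D$.
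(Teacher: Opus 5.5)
Your proof is correct and follows essentially the same route as the paper. Both layer the vertices by $\dist_{G^\ast}(\cdot,t)$, show $|L_j|+|L_{j+1}|\ge 3$ because a single weight-$1$ exit arc from $\{v : \dist_{G^\ast}(v,t)\ge j+1\}$ cannot serve both trees (the paper phrases this as $|\delta^+(S)|\le 1$ contradicting Edmonds' bound), and then pair layers; the only differences are cosmetic, namely that you bound $D(s)$ instead of $M=\max_v \dist_{G^\ast}(v,t)$ (which lets you justify nonemptiness of the layers directly along $P$) and you pair from $L_1$ rather than leaving $L_M$ over in the even case.
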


\begin{proof}
  Let $M = \max_{v \in V} \dist_{G^\ast}(v, t)$ and $L_i = \{v \in V \mid \dist_{G^\ast}(v, t) = i\}$ for $0 \le i \le M$.
  First, we show $|L_i| + |L_{i+1}| \ge 3$ for every $0 \le i < M$.

  Assuming $|L_j| + |L_{j+1}| < 3$ for some $j$, $|L_j| = |L_{j+1}| = 1$ holds because $L_i \neq \emptyset$ for each $0 \le i \le M$.
  Let $L_j = \{y\}$, $L_{j+1} = \{x\}$ and $S = L_{j+1} \cup \cdots \cup L_M = \{v\in V \mid \dist_{G^\ast}(v, t) \ge j+1\}$ be a nonempty subset of $V\setminus\{t\}$.
  By minimality of the distances $\dist_{G^\ast}(\cdot, t)$ defining $L_0, \ldots, L_M$, the only arcs that can leave $S$ are weight-$1$ arcs from $L_{j+1}$ to $L_j$.
  The only such arc is $(x,y)$, which, having weight $1$, is failable, hence not one of the duplicated non-failable pairs; it therefore contributes only $1$ to $\delta^+(S)$. Thus $|\delta^+(S)| \le 1$, contradicting Theorem~\ref{thm:edmonds}.

  Pairing $(L_0,L_1), (L_2,L_3), \ldots$ and summing the inequalities $|L_i|+|L_{i+1}|\ge3$ gives $n \ge 3\lfloor M/2 \rfloor + 1$ if $M$ is even (with $L_M$ left over) and $n \ge 3(M+1)/2$ if $M$ is odd (all layers paired); either way $M \le \lfloor 2(n-1)/3 \rfloor$. Since the number of failable arcs of $P$ equals $\dist_{G^\ast}(s, t) \le M$, the lemma follows. \qed
\end{proof}

\paragraph{Scheme $\mathcal{D}$.}
Let $e_1, e_2, \dots, e_{h'}$ be the failable arcs on $P$, and let $x_i$ be the tail of $e_i$.
For each $1 \le i \le h'$, apply the construction of $\mathcal{A}_1$
(Section~\ref{sec:one-failure}) to the graph $G \setminus e_i$ with source $x_i$ and destination $t$, obtaining two in-arborescences $T_0^{(i)}, T_1^{(i)}$ rooted at $t$ that are arc-disjoint except on non-failable arcs in $G \setminus e_i$.

The packet header carries a value $c \in \{0, 1, \ldots, h'\}$, initialised to $0$,
together with a bit $b_0 \in \{0,1\}$ that is only meaningful once $c \neq 0$. While
$c = 0$, the packet follows $P$, traversing shared (non-failable) arcs without changing the header state; when it reaches the tail $x_i$ of a failable arc $e_i$ and finds $e_i$ faulty, $c$ is set to $i$ and the packet is forwarded according to
$T_0^{(i)}$. Once $c = i \neq 0$, the packet follows $T_0^{(i)}$ or $T_1^{(i)}$
exactly as in the forwarding rule of $\mathcal{A}_1$.

\begin{theorem}\label{thm:2-resilient-path}
  The scheme $\mathcal{D}$ is a $2$-resilient local failover routing scheme using at most
  $\left\lceil \log\!\left(1+2\lfloor2(n-1)/3\rfloor\right)\right\rceil$
  rewritable bits in the packet header.
\end{theorem}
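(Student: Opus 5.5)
The proof has two parts: correctness of $\mathcal{D}$ under at most two failures, and a count of the header states it uses.

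\textbf{Correctness.} Fix a failure set $F$ with $|F| \le 2$ such that $s$ still reaches $t$. I will do a case analysis on whether some failable arc of $P$ lies in $F$.

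\emph{Case 1: no failable arc of $P$ fails.} The shared arcs of $P$ are non-failable, so they cannot fail either. Hence the packet traverses $P$ with $c=0$ and reaches $t$.

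\emph{Case 2: some failable arc of $P$ fails.} Let $e_i$ be the first failable arc of $P$ in $F$. The packet reaches $x_i$ along the intact prefix of $P$, detects $e_i$, and sets $c=i$. At most one further failed arc remains, so the residual failure set in $G\setminus e_i$ has size at most one. I then invoke Theorem~\ref{thm:one-failure} for $\mathcal{A}_1$ on $G\setminus e_i$ with source $x_i$. Its preconditions are that $x_i$ reaches $t$ in $G\setminus F$, and that the residual failure is not a non-failable arc of $G\setminus e_i$.

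\textbf{Main obstacle: the preconditions in Case 2.} Reachability of $t$ from $x_i$ holds because $x_i$ lies on a walk from $s$ and the model's reachability hypothesis says every node reachable from $s$ in $G\setminus F$ can reach $t$. For the second precondition, suppose the second failed arc $f$ were non-failable in $G\setminus e_i$ for some relevant tail. Then removing $\{e_i,f\}$ would disconnect that tail from $t$. If that tail is reachable from $x_i$, this contradicts the hypothesis.

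I also must check two subtleties. First, $\mathcal{A}_1$ pruned vertices irrelevant to $x_i$; vertices not reachable from $x_i$ in $G\setminus e_i$ are never visited, so the pruning is harmless. Second, when Theorem~\ref{thm:one-failure} is applied with respect to $x_i$ as source, the arborescences $T_0^{(i)},T_1^{(i)}$ only need to span vertices reachable from $x_i$. Handling non-failability relative to $G\setminus e_i$, as opposed to $G$, is the step needing the most care. I would argue it via the $(s,t)$ reachability condition of the game applied to $F$ itself.

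\textbf{Counting bits.} The header states are $c=0$, whose $b_0$ is irrelevant, and the pairs $(c,b_0)$ with $c\in\{1,\dots,h'\}$ and $b_0\in\{0,1\}$. That gives $1+2h'$ states in total. By Lemma~\ref{lem:two-tree-shortcut}, $h'\le\lfloor 2(n-1)/3\rfloor$. Encoding the states in binary therefore uses at most $\lceil\log(1+2\lfloor 2(n-1)/3\rfloor)\rceil$ bits, which is the bound in the statement.
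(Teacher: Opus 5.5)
Your proposal is correct and follows essentially the same route as the paper: the same split on whether a failable arc of $P$ fails, the same hand-off at the first failed arc $e_i$ to the $\mathcal{A}_1$ construction on $G\setminus e_i$ started at $x_i$, and the same count of $1+2h'$ header states combined with Lemma~\ref{lem:two-tree-shortcut}. Your extra check that the residual failure cannot be a shared (non-failable) arc of $T_0^{(i)},T_1^{(i)}$ is sound, because that arc's tail is reached from $x_i$ over healthy arcs and so is reachable from $s$ in $G\setminus F$; the paper leaves this implicit in ``the same case analysis.'' One small fix: state the full hypothesis on $F$ (every vertex reachable from $s$ in $G\setminus F$ reaches $t$) at the outset, not only ``$s$ still reaches $t$.''
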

\begin{proof}
  Let $F \subseteq E$, $|F| \le 2$, be the set of failed arcs, and suppose every vertex reachable from $s$ in $G \setminus F$ can reach $t$ in $G \setminus F$.

  If $F \cap P = \emptyset$, the packet follows $P$ to $t$ with $c=0$ throughout (every shared arc of $P$ is healthy). Otherwise, let $e_i$ be the arc of $F \cap P$ closest to $s$ along $P$. The packet follows $P$ up to $x_i$ using only healthy arcs, detects $e_i$ faulty, and sets $c \gets i$. Since $x_i$ is reached without using any failed arc, it is reachable from $s$ in $G \setminus F$, hence, by assumption, can reach $t$ in $G \setminus F$; as $G \setminus F \subseteq G \setminus e_i$, $x_i$ can also reach $t$ in $G \setminus e_i$. The remaining failed arcs $F \setminus \{e_i\}$ number at most $1$ and all lie in $G \setminus e_i$, so, by the same case analysis as in the proof of Theorem~\ref{thm:one-failure} applied to $G \setminus e_i$, $x_i$, and $T_0^{(i)}, T_1^{(i)}$, the packet is delivered from $x_i$ to $t$ despite this single additional failure.

  By Lemma~\ref{lem:two-tree-shortcut}, $h' \le \lfloor 2(n-1)/3 \rfloor$, so the header takes at most $1+2h' \le 1+2\lfloor 2(n-1)/3\rfloor$ distinct values ($c=0$, or $c=i \in \{1,\ldots,h'\}$ together with $b_0 \in \{0,1\}$), encodable in at most $\left\lceil \log\!\left(1+2\lfloor2(n-1)/3\rfloor\right)\right\rceil$ rewritable bits. \qed
\end{proof}

\paragraph{Improvement.}
By the $k=2$ case of Theorem~\ref{thm:multi-failure}, $\mathcal{C}_2$ requires at most $1+\left\lceil \log (2n-2) \right\rceil$ bits, while $\mathcal{D}$ requires at most $\left\lceil \log\!\left(1+2\lfloor2(n-1)/3\rfloor\right)\right\rceil$ bits.
For every $n \ge 3$, $\mathcal{D}$'s bound is strictly smaller than $\mathcal{C}_2$'s, saving at least $1$ bit, on top of the improvement of $\mathcal{C}_2$ over the $\left\lceil \log \binom{2n-1}{2} \right\rceil$ bits required by $\mathcal{B}_2$ established above.
As we show in Section~\ref{sec:lower-bound} (Theorem~\ref{thm:lb-general-k}), any local failover routing scheme tolerating $2$ arc failures requires at least $\left\lceil \log \lfloor n/3 \rfloor \right\rceil$ rewritable bits: while $\mathcal{B}_2$ leaves a gap to this lower bound that grows without bound as $\Theta(\log n)$, $\mathcal{D}$ shrinks this gap to 3 bits.

\section{Improved Lower Bound for Multi Failures at Least $2$}\label{sec:lower-bound}
In this section, we show the lower bounds on the number of rewritable bits in the packet for $k \ge 2$.
To prove this, we construct a graph in which packet headers must encode a large number of failure patterns that must be distinguished for the correct delivery of packets.
First, we show the construction of a gadget $A_\ell$ as follows:
\begin{itemize}
  \item Create two paths $(u_1, u_2, \dots, u_{2\ell-1})$ and $(v_1, v_2, \dots, v_\ell)$,
  \item Add an arc $(u_{2i-1}, v_i)$ for each $1 \le i \le \ell$.
\end{itemize}
An example of $A_\ell$ is illustrated in Figure~\ref{fig:gadget}.
Here, we show the following lemma.

\begin{figure}[t]
  \centering
  \begin{subfigure}{0.49\linewidth}
    \centering
    \includegraphics[width=\linewidth]{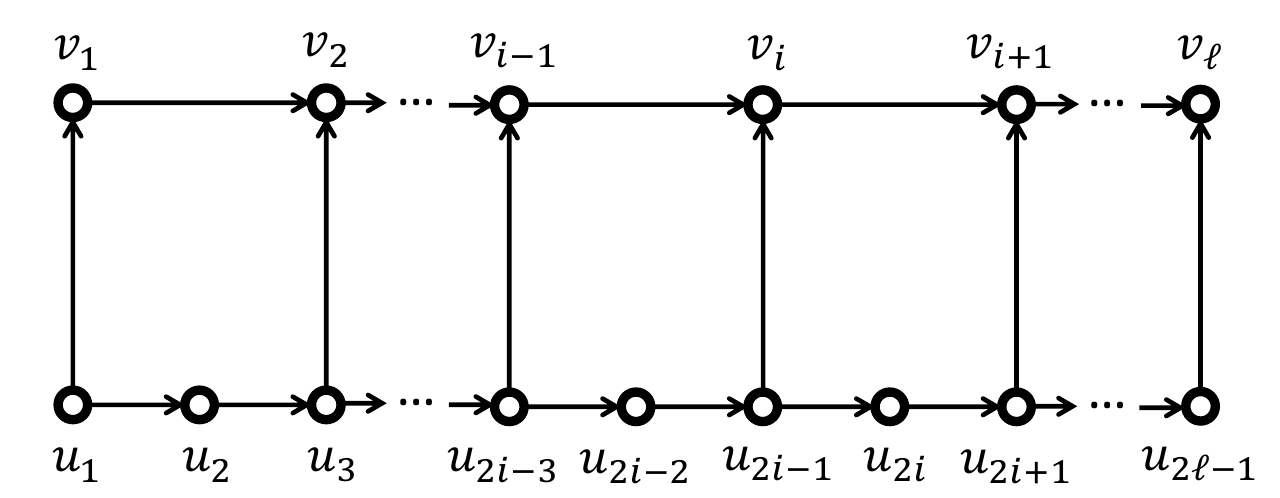}
    \caption{The example of the gadget $A_\ell$.}
    \label{fig:gadget}
  \end{subfigure}
  \hfill
  \begin{subfigure}{0.49\linewidth}
    \centering
    \includegraphics[width=\linewidth]{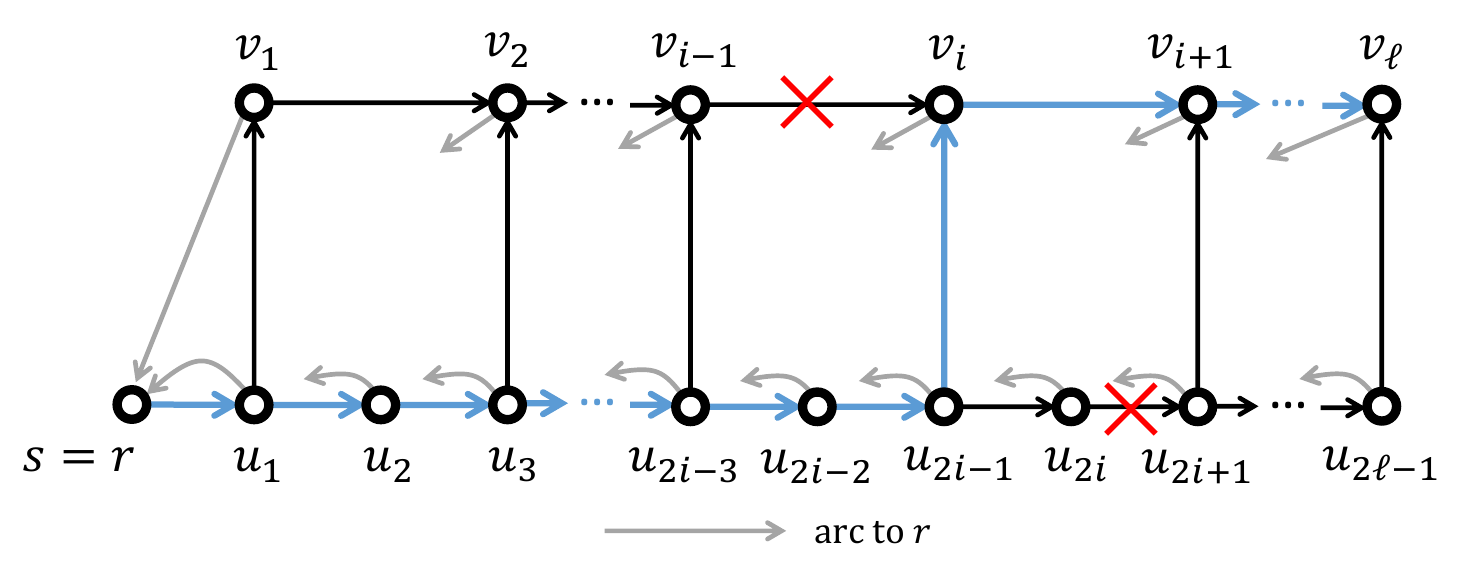}
    \caption{The strategy of player 2}
    \label{fig:lb-graph-proof}
  \end{subfigure}
  \caption{The gadget construction and the failure pattern used in the proof.}
  \label{fig:lb-graphs}
\end{figure}

\begin{lemma}\label{lem:lb-gadget}
  Prepare a vertex $r$ and a gadget $A_\ell$.
  Let $s = r$ and $t = v_\ell$, and add arcs $(u_i, r), (v_j, r)$ for $1 \le i \le 2\ell-1, 1 \le j \le \ell$ and $(r, u_1)$.
  An example of player 2's strategy is shown in Figure~\ref{fig:lb-graph-proof}.
  Let $k =2$.
  Now player 2 can decide on a set of faulty arcs so that any surviving $s$--$t$ path necessarily contains $(u_{2j-1}, v_j)$ for any $1 \le j \le \ell$.
  Precisely, if player 2 wants to enforce $s$--$t$ paths containing $(u_{2j-1}, v_j)$, it breaks the following arcs:
  \begin{itemize}
    \item $(u_{2j}, u_{2j+1})$ if $1 \le j \le \ell-1$,
    \item $(v_{j-1}, v_j)$ if $2 \le j \le \ell$.
  \end{itemize}
  In this case, at least $\lceil \log \ell \rceil$ bits are required as rewritable bits in the packet header.
\end{lemma}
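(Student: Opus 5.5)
The plan is a fooling-set (pigeonhole) argument over the $\ell$ failure patterns $F_j$ described in the statement. Here $F_j$ consists of $(u_{2j},u_{2j+1})$ (if $j\le \ell-1$) and $(v_{j-1},v_j)$ (if $j\ge 2$), so $|F_j|\le 2=k$. I will show that the header carried on the \emph{last} traversal of $(r,u_1)$ must differ between any two of these patterns. That forces at least $\ell$ distinct header values, hence at least $\lceil\log\ell\rceil$ bits.

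\textbf{Step 1: validity and structure of each scenario.} For each $j$, I check that $F_j$ is a legal adversary move. Every vertex $u_i$ and $v_i$ keeps its arc to $r$, and $r$ reaches $t$ via $r\to u_1\to\cdots\to u_{2j-1}\to v_j\to v_{j+1}\to\cdots\to v_\ell$. So every vertex reachable from $s$ can reach $t$ in $G\setminus F_j$. Next I verify the claim of the statement: every $s$--$t$ path uses $(u_{2j-1},v_j)$. Three facts give this.
\begin{itemize}
\item The only arc out of $r$ is $(r,u_1)$.
\item Along the $u$-path, nothing beyond $u_{2j}$ is reachable, because $(u_{2j},u_{2j+1})$ is broken.
\item From any $v_i$ with $i<j$, one can only reach $v_{j-1}$ and then $r$, because $(v_{j-1},v_j)$ is broken.
\end{itemize}
Hence $\{v_j,\dots,v_\ell\}$ is entered only via $(u_{2j-1},v_j)$. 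The same observations give the sharper fact I need: a walk starting at $u_1$ that reaches $t$ \emph{without revisiting $r$} must be exactly $u_1,u_2,\dots,u_{2j-1},v_j,\dots,v_\ell$. Indeed, it cannot take $(u_{2i-1},v_i)$ for $i<j$, it cannot go to $u_{2j}$ (whose only surviving exit leads to $r$), and it cannot use any arc into $r$.

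\textbf{Step 2: the distinguishing time.} In scenario $j$ the packet is delivered in finite time and must leave $r$ through $(r,u_1)$. So there is a well-defined last traversal of $(r,u_1)$, with header $b^{(j)}$. After it the packet never returns to $r$, so by Step 1 it walks $u_1,\dots,u_{2j-1}$ and then takes $(u_{2j-1},v_j)$. Suppose for contradiction that $b^{(j)}=b^{(j')}$ for some $j<j'$. In scenario $j'$ the packet also enters $u_1$ via $(r,u_1)$ with the same header. At $u_1,\dots,u_{2j-1}$ the local failure sets are empty in both scenarios, since the failed arcs of $F_{j'}$ have tails $u_{2j'}$ and $v_{j'-1}$, neither among these vertices. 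The incoming arcs also agree. By induction on the hops, the forwarding rules make identical decisions and header rewrites, so in scenario $j'$ the packet also takes $(u_{2j-1},v_j)$ and never returns to $r$. But in scenario $j'$, from $v_j$ with $j<j'$ the only vertices reachable are $v_j,\dots,v_{j'-1}$ and $r$, so $t$ cannot be reached without passing through $r$. This contradicts the choice of the last traversal. Hence the values $b^{(1)},\dots,b^{(\ell)}$ are pairwise distinct, and the header needs at least $\lceil\log\ell\rceil$ bits.

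\textbf{Main obstacle.} The delicate point is the choice of the comparison time. The packet may probe failed arcs and bounce back to $r$ repeatedly, learning about the failure pattern and encoding it into the header. Comparing headers at the \emph{first} visit to $u_1$ would therefore fail. Taking the \emph{last} traversal of $(r,u_1)$ is what removes this possibility: after it, no further probing is allowed, since any probe into $u_{2j}$ or into $v_i$ with $i<j$ leads only back to $r$. The remaining care is in the step-by-step simulation argument: I must check that the in-arc, $F_v$, $s$, $t$ and $b$ coincide at each of $u_1,\dots,u_{2j-1}$, so that the two scenarios cannot diverge before the fatal arc $(u_{2j-1},v_j)$.
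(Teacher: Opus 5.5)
Your proof is correct and is essentially the paper's pigeonhole argument. The header on entering the $u$-path must select one of the $\ell$ crossing arcs $(u_{2j-1},v_j)$, because $u_1,\dots,u_{2j-1}$ observe no local failures in any of the scenarios. Your comparison at the \emph{last} traversal of $(r,u_1)$, together with the hop-by-hop simulation, makes rigorous what the paper compresses into ``the packets can traverse at most $\ell-1$ distinct paths'' and ``cannot be traversed regardless of the header contents.''
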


\begin{proof}
  As long as player 2 follows the strategy described above, since every node has an arc to $s$ and there is a path from $s$ to $t$, any node reachable from $s$ can still reach $t$ even after removing the arcs.
  Removing $(u_{2j}, u_{2j+1})$ disconnects the path from $s$ to $t$ through $(u_{2j'-1}, v_{j'})$ for any $j < j'$.
  Similarly, removing $(v_{j-1}, v_j)$ prevents the path from $s$ to $t$ through $(u_{2j'-1}, v_{j'})$ for any $j' < j$.
  As a result, the only valid way to reach $t$ is through the arc $(u_{2j-1}, v_j)$\footnote{In local failover routing, the next hop is determined using the set of locally faulty outgoing arcs and the packet header. Therefore, if the arc $(u_{2j-1}, u_{2j})$ itself were faulty, node $u_{2j-1}$ could detect this locally without using the packet header, and routing could be performed simply by attempting to move along the arc $(u_{2j-1}, v_j)$, even without the information in the packet header. Neither of $u_{2j-1}$'s own outgoing arcs, $(u_{2j-1}, u_{2j})$ and $(u_{2j-1}, v_j)$, is ever failed by player~2; the failed arcs $(u_{2j}, u_{2j+1})$ and $(v_{j-1}, v_j)$ lie one hop away, so this header-independent shortcut is unavailable. Thus, when the packet arrives at $u_{2j-1}$, the forwarding rule must decide whether to use the arc $(u_{2j-1}, v_j)$ without observing any local failure, and this decision must therefore be made using the packet header.}.

  Suppose for contradiction that there exists a local failover routing scheme that routes the packet using at most $\lceil \log \ell \rceil -1$ rewritable bits.
  From the bound of the rewritable bit length, in this scheme, the packets can traverse at most $\ell - 1$ distinct paths between $s$ and $t$.
  Therefore, there exists an arc $(u_{2j-1}, v_j)$ that cannot be traversed regardless of the header contents.
  If player 2 adopts a strategy such that only the path that passes through this arc can reach $t$, then packet routing between $s$ and $t$ will always fail, which is a contradiction. \qed
\end{proof}

Taking $\ell = \lfloor n/3 \rfloor$ and padding with the remaining $n-3\ell \in \{0,1,2\}$ isolated, unreachable vertices already gives the $k=2$ bound: this is exactly the $k'=1$ case of the general construction below (Theorem~\ref{thm:lb-general-k}), a single gadget $A_\ell$ with hub $r$ and no cascading, to which Lemma~\ref{lem:lb-gadget} applies directly.

Next, we construct a lower bound graph for general $k$ by cascading multiple copies of the gadget $A_\ell$ (together with its hub).
By Lemma~\ref{lem:lb-gadget}, we can limit the number of paths from $s$ to $t$ in the gadget to only one by removing at most two arcs.
In this case, $\lceil \log \ell \rceil$ bits are required to remember which $(u_{2i-1}, v_i)$ in the gadget is the path from $s$ to $t$.
Thus, the total amount of information that must be encoded in the packet header grows in proportion to the number of connected gadgets.
Now, we show the following theorem.

\begin{theorem}\label{thm:lb-general-k}
  For any integer $k~(1 \le \lfloor k/2 \rfloor \le (n-1)/8)$, there exists a lower bound graph $G$ of $n$ nodes such that any local failover routing scheme tolerating at most $k$ arc failures requires at least
  \[
    \left\lceil \left\lfloor \frac{k}{2} \right\rfloor \log \left\lfloor \frac{n-1+\lfloor k/2 \rfloor}{3 \lfloor k/2 \rfloor} \right\rfloor \right\rceil
  \]
  rewritable bits in the packet header.
\end{theorem}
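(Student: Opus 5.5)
We set $k' = \lfloor k/2 \rfloor$ and $\ell = \lfloor (n-1+k')/(3k') \rfloor$. The lower bound graph is a cascade of $k'$ copies $A_\ell^{(1)}, \ldots, A_\ell^{(k')}$ of the gadget, each with its own hub $r^{(m)}$ wired as in Lemma~\ref{lem:lb-gadget}: arcs $(r^{(m)}, u_1^{(m)})$, and arcs from every $u_i^{(m)}$ and $v_j^{(m)}$ back to $r^{(m)}$. The copies are glued by identifying the terminal $v_\ell^{(m)}$ of copy $m$ with the hub $r^{(m+1)}$ of copy $m+1$. We take $s = r^{(1)}$ and $t = v_\ell^{(k')}$.

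Each gadget has $3\ell-1$ vertices and each hub one more. Gluing saves $k'-1$ vertices, so the total is $3\ell k' + 1 - k'$, which is at most $n$ by the choice of $\ell$. The remaining vertices are padded as isolated vertices, as in the $k=2$ remark. The hypothesis $k' \le (n-1)/8$ ensures $\ell \ge 2$, so the bound is nontrivial. The case $k'=1$ is exactly Lemma~\ref{lem:lb-gadget}.

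\textbf{Adversary.} For any vector $(j_1, \ldots, j_{k'}) \in \{1,\ldots,\ell\}^{k'}$, player 2 applies the two-arc strategy of Lemma~\ref{lem:lb-gadget} inside each copy $m$ to force the crossing arc $(u^{(m)}_{2j_m-1}, v^{(m)}_{j_m})$. This uses at most $2k' \le k$ failures.

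We must check the reachability precondition of the model. Every vertex of copy $m$ reachable from $s$ has an arc to $r^{(m)}$. The hub $r^{(m)}$ can reach $r^{(m+1)}$ through the forced arc, because the failures of copy $m$ lie inside copy $m$ only. By induction, every reachable vertex reaches $t$. Moreover, since back-arcs only go to the hub of the same copy, any surviving $s$--$t$ path traverses the copies in order and must use the forced arc in every copy.

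\textbf{Counting.} Suppose a scheme uses $B < \lceil k' \log \ell \rceil$ bits, so $2^B < \ell^{k'}$. Fix $s$ and $t$. The route taken by the packet is a deterministic function of the failure set, the forwarding rules, and the header values seen along the way.

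We would argue as in Lemma~\ref{lem:lb-gadget}, but jointly across copies. When the packet first arrives at $u_1^{(m)}$ it has seen no local failure, and by the footnote of Lemma~\ref{lem:lb-gadget} no vertex ever observes a failure on its own arcs until it tries a broken one. The plan is to show that the choice of crossing arc in copy $m$ is determined by the header with which the packet re-enters the hub $r^{(m)}$ for the last time.

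The cleanest formulation is by contradiction. Consider the header carried when the packet first enters $r^{(k')}$, together with the analogous headers at earlier hubs. Successful delivery for all $\ell^{k'}$ patterns requires that the packet behaves differently in at least one copy for any two distinct patterns. So the map from patterns to header values must be injective, which would force $2^B \ge \ell^{k'}$, a contradiction.

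\textbf{Main obstacle.} The hard step is rigorously justifying injectivity. The packet may probe: it can enter a copy, hit a broken arc $(u_{2j}, u_{2j+1})$ or $(v_{j-1}, v_j)$, return to the hub, and so learn information adaptively. This suggests the information does not need to be stored at the outset.

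We would handle this with an adversary or fooling-set argument on the header at the moment the packet finally crosses into copy $m+1$. That moment is the first time it reaches $r^{(m+1)}$, which is the only way forward, and after it the packet can never return to copy $m$, since there are no arcs back. From then on, the behavior depends only on the header $b$ and on the failures in copies $m+1, \ldots, k'$.

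We then take two patterns that agree on copies $m+1, \ldots, k'$ but differ on copy $m+1$'s future requirements, and compare them to patterns whose earlier parts differ. The argument has to show that the needed information about later copies cannot be learned before those copies are reached. The failures of copy $m'$ are invisible until the packet is inside copy $m'$, so on entering copy $m'$ the packet has had no way to learn pattern $j_{m'}$ in advance.

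This actually shows that each copy independently needs only $\log \ell$ bits at its own entry. Getting the product $\ell^{k'}$ therefore requires a more delicate argument, and I expect this step to be the real difficulty. I would try to make it work by having the adversary choose failures adaptively and by counting distinct (copy-state, header) configurations. If that fails, I would at least recover the per-copy bound.
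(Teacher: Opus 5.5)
\textbf{Verdict: genuine gap.} The problem lies in your construction, not only in the counting step you flag.

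Each copy has its own hub $r^{(m)}$, and $v^{(m)}_\ell$ is glued to $r^{(m+1)}$. So a wrong crossing in copy $m$ only sends the packet back to $r^{(m)}$, and the packet can search each copy locally and discard what it learned once it moves on. Concretely, let the header hold a counter $c\in\{1,\ldots,\ell\}$:
\begin{itemize}
\item at $r^{(m)}$, set $c=1$ if the packet arrived along a forward arc (or originates there), and $c\gets c+1$ if it arrived along a back-arc of copy $m$; the in-arc tells the hub which case holds;
\item at $u^{(m)}_{2i-1}$, cross iff $i=c$;
\item whenever a forward arc is found faulty, return to $r^{(m)}$.
\end{itemize}
Against every pattern $(j_1,\ldots,j_{k'})$, each attempt with $c<j_m$ bounces off the failed arc $(v^{(m)}_{j_m-1},v^{(m)}_{j_m})$, and the attempt $c=j_m$ succeeds. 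Hence $\lceil\log\ell\rceil$ bits suffice on your graph regardless of $k'$.

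Consequently, no adaptive-adversary or configuration-counting refinement can extract $k'\log\ell$ from your graph and failure family. Your injectivity map lands in tuples of headers seen at different hubs at different times, which does not bound $2^B$. The per-copy fallback is all that holds there, and it is not the theorem for $k'\ge 2$. A minor point: as built, $v^{(m)}_\ell=r^{(m+1)}$ keeps its back-arc to $r^{(m)}$, so ``no arcs back'' is also false.

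\textbf{The missing idea} is a single hub $r=s$ shared by all gadgets. Every $u^j_i$ and $v^j_i$ gets a back-arc to $s$, and the gadgets are chained by arcs $(v^j_N,u^{j+1}_1)$ together with $(r,u^1_1)$. This uses the same $k'(3N-1)+1$ vertices as your graph. Now every wrong turn in any gadget leads into $u^j_{2i_j}$ or $v^j_{i_j-1}$, whose only healthy out-arc goes back to $s$.

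Consider the last traversal of $(s,u^1_1)$. The rest of the walk reaches $t$ without returning to $s$. So it takes the forced crossing in every gadget and never visits the tail of a failed arc. It therefore observes no failure, enters $u^1_1$ by its only in-arc, and is a function of the header $b$ carried on $(s,u^1_1)$ alone. Distinct index vectors require distinct walks, hence distinct values of $b$, so $2^B\ge N^{k'}$.

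Probing before that final departure does not help, because everything learned must fit into that one $B$-bit header. This is exactly the paper's step ``if the correct route is not selected, the packet is necessarily returned to $s$.''
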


\begin{proof}
  Let $k' = \lfloor k/2 \rfloor, N = \lfloor (n-1+k')/3k' \rfloor$.
  The lower bound graph $G$ consists of a node $r$ and $k'$ gadgets $A_N$ as $A_N^1, A_N^2, \dots, A_N^{k'}$.
  Here, we denote any node $u_i$ in $A_N^j$ by $u_i^j$.
  Similarly, we denote any node $v_i$ in $A_N^j$ by $v_i^j$.
  Note that in our construction, we require that the number of nodes in each gadget is greater than or equal to 8.
  Hence, our construction only holds for the case $k' \le (n-1)/8$.
  We add arcs as follows:
  \begin{itemize}
    \item Arcs $(u_i^j, r)$ for each $1 \le i \le 2N-1$ and $1 \le j \le k'$
    \item Arcs $(v_i^j, r)$ for each $1 \le i \le N$ and $1 \le j \le k'$
    \item Arcs $(v_N^i, u_1^{i+1})$ for each $1 \le i < k'$
    \item An arc $(r, u_1^1)$
  \end{itemize}
  An example of $G$ is illustrated in Figure~\ref{fig:lb-general-k}.
  Let $s = r$ and $t = v_N^{k'}$.
  For each gadget $A_N^j$, player 2 determines the index $i_j~(1 \le i_j \le N)$.
  Using a technique similar to Lemma~\ref{lem:lb-gadget}, for each gadget $A_N^j$, it removes at most two arcs such that only paths through the arc $(u_{2i_j-1}^j, v_{i_j}^j)$ can reach $t$ from $s$.
  Note that player 2 can determine the index $i_j$ of each gadget after looking at the routing strategy of player 1.
  In addition, the indices of each gadget are selected independently.
  Therefore, player 1 must design a routing strategy that tries at least $N^{k'}$ routes dependent on the packet header information.
  If the correct route is not selected, the packet is necessarily returned to $s$.
  Therefore, at the time of sending a packet from $s$ to $u_1^1$, the route of the packet determined by the header information must be correct eventually.
  To decide which route to use, at least
  \[
    \left\lceil \log\!\left(N^{k'}\right) \right\rceil = \left\lceil k' \log N \right\rceil
    = \left\lceil \left\lfloor \frac{k}{2} \right\rfloor \log \left\lfloor \frac{n-1+\lfloor k/2 \rfloor}{3 \lfloor k/2 \rfloor} \right\rfloor \right\rceil
  \]
  rewritable bits. \qed

\end{proof}

\begin{figure}[t]
  \centering
  \includegraphics[width=\linewidth]{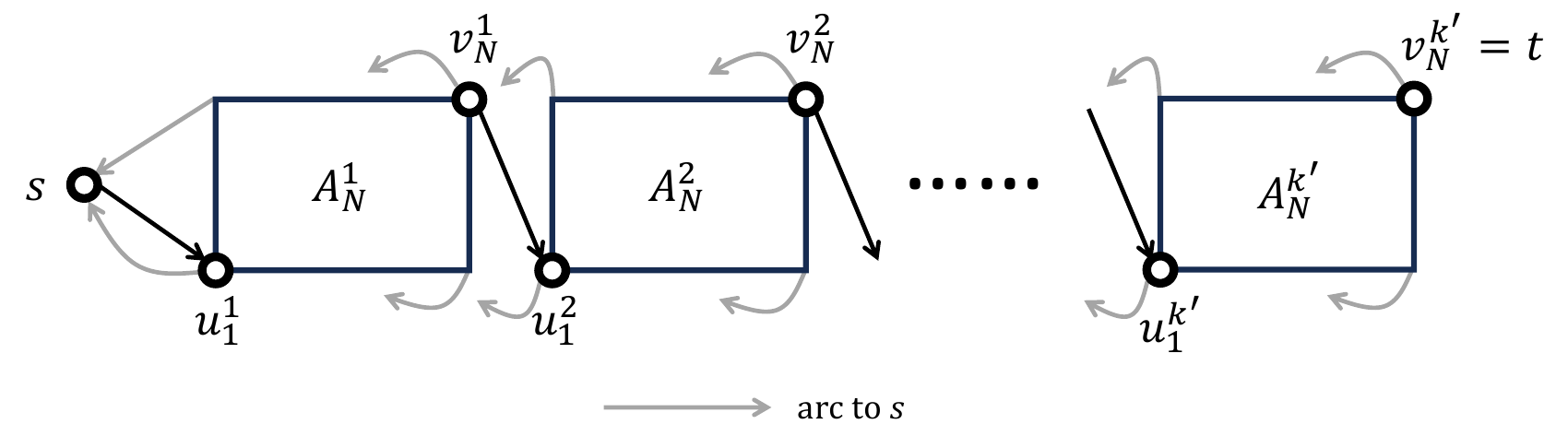}
  \caption{The example of the lower bound construction for general $k \ge 2$. The graph consists of $k'$ disjoint gadgets. Player 2 removes at most two arcs from each gadget.}
  \label{fig:lb-general-k}
\end{figure}

\section{Concluding Remarks}

We established improved upper and lower bounds on the number of
rewritable header bits required for resilient local failover routing
in directed graphs. For general $k$, our scheme $\mathcal{C}_k$
separates the handling of the final failure from that of the first
$k-1$ failures, reducing the underlying state-space bound to
$2\binom{2n+k-4}{k-1}$. For $k=2$, the path-based scheme $\mathcal{D}$
requires at most $\left\lceil \log\!\left(1+2\lfloor 2(n-1)/3 \rfloor\right)\right\rceil$ bits. 
Together with our lower bound
$\left\lceil\log\lfloor n/3\rfloor\right\rceil$, this leaves an
additive gap of at most $3$ bits.

Several questions remain open. First, determining the exact header
size for 2-resilient routing requires either a stronger lower bound or
a refinement of $\mathcal{D}$. Second, for $k>n$, the best known lower
bound is $\Omega(n)$ bits, whereas $\mathcal{C}_k$ uses
$O(n\log(k/n))$ bits. Closing this gap, particularly when $k$ is
substantially larger than $n$, is an important direction for future
work.

%
%
\bibliographystyle{alpha}
\bibliography{reference}

@misc{dai2024dynamic,
  author        = {Wenkai Dai and Klaus-Tycho Foerster and Stefan Schmid},
  title         = {On the Resilience of Fast Failover Routing Against Dynamic Link Failures},
  year          = {2024},
  eprint        = {2410.02021},
  archivePrefix = {arXiv},
  primaryClass  = {cs.NI},
  url           = {https://arxiv.org/abs/2410.02021}
}

@article{CKRRS21,
  author  = {Marco Chiesa and Andrzej Kamisi{\'n}ski and Jacek Rak and G{\'a}bor R{\'e}tv{\'a}ri and Stefan Schmid},
  title   = {A Survey of Fast-Recovery Mechanisms in Packet-Switched Networks},
  journal = {IEEE Communications Surveys \& Tutorials},
  volume  = {23},
  number  = {2},
  pages   = {1253--1301},
  year    = {2021},
  doi     = {10.1109/COMST.2021.3063980}
}

@book{RH20,
  editor    = {Jacek Rak and David Hutchison},
  title     = {Guide to Disaster-Resilient Communication Networks},
  series    = {Computer Communications and Networks},
  publisher = {Springer},
  year      = {2020},
  doi       = {10.1007/978-3-030-44685-7}
}

@inproceedings{EK24,
  author    = {Erik van den Akker and Klaus-Tycho Foerster},
  title     = {Brief Announcement: On the Feasibility of Local Failover Routing on Directed Graphs},
  booktitle = {Proc. of the 26th International Symposium on Stabilization, Safety, and Security of Distributed Systems (SSS)},
  pages     = {375--380},
  year      = {2024},
  doi       = {10.1007/978-3-031-74498-3_27}
}

@article{KKI26,
  title={Improved Algorithms for Local Failover Routing on Directed Graphs},
  author={Yuki Kawashima and Naoki Kitamura and Taisuke Izumi},
  journal={IEICE Transactions on Fundamentals of Electronics, Communications and Computer Sciences},
  volume={advpub},
  pages={2026EAP1035},
  year={2026},
  doi={10.1587/transfun.2026EAP1035}
}

@inproceedings{FGPSSS12,
  author    = {Joan Feigenbaum and Brighten Godfrey and Aurojit Panda and Michael Schapira and Scott Shenker and Ankit Singla},
  title     = {Brief announcement: on the resilience of routing tables},
  booktitle = {Proc. of the 2012 ACM Symposium on Principles of Distributed Computing (PODC)},
  pages     = {237--238},
  year      = {2012},
  doi       = {10.1145/2332432.2332478}
}

@article{VWF26,
  author = {Erik van den Akker and Marvin Weiler and Klaus-Tycho Foerster},
  title  = {Towards 1-Resilient Local Fast Failover on Directed Graphs: Flip 1 Bit Once},
  year   = {2026}
}

@inproceedings{VWF,
  author    = {Erik van den Akker and Klaus-Tycho Foerster},
  title     = {Brief Announcement: An improved lower bound for local failover routing on directed networks},
  booktitle = {Proc. of the 38th ACM Symposium on Parallelism in Algorithms and Architectures (SPAA)},
  pages     = {91--93},
  year      = {2026},
  doi       = {10.1145/3816782.3819192}
}

@article{EDM73,
  author    = {Jack Edmonds},
  editor    = {Randall Rustin},
  title     = {Edge-disjoint branchings},
  journal   = {Combinatorial Algorithms},
  publisher = {Algorithmics Press},
  pages     = {91--96},
  year      = {1973},
  url       = {https://cir.nii.ac.jp/crid/1573387451139139584}
}

@inproceedings{dai2023tight,
  author    = {Wenkai Dai and Klaus-Tycho Foerster and Stefan Schmid},
  title     = {A tight characterization of fast failover routing: Resiliency to two link failures is possible},
  booktitle = {Proc. of the 35th ACM Symposium on Parallelism in Algorithms and Architectures (SPAA)},
  pages     = {153--163},
  year      = {2023},
  doi       = {10.1145/3558481.3591080}
}

@article{chiesa2016resiliency,
  author  = {Marco Chiesa and Ilya Nikolaevskiy and Slobodan Mitrovi{\'c} and Andrei Gurtov and Aleksander M{\k{a}}dry and Michael Schapira and Scott Shenker},
  title   = {On the resiliency of static forwarding tables},
  journal = {IEEE/ACM Transactions on Networking},
  volume  = {25},
  number  = {2},
  pages   = {1133--1146},
  year    = {2017},
  doi     = {10.1109/TNET.2016.2619398}
}

@inproceedings{foerster2021feasibility,
  author    = {Klaus-Tycho Foerster and Juho Hirvonen and Yvonne-Anne Pignolet and Stefan Schmid and Gilles Tredan},
  title     = {On the Feasibility of Perfect Resilience with Local Fast Failover},
  booktitle = {Proc. of the 2nd Symposium on Algorithmic Principles of Computer Systems (APOCS)},
  pages     = {55--69},
  year      = {2021},
  doi       = {10.1137/1.9781611976489.5}
}

@inproceedings{grobe2024local,
  author    = {Jonas Grobe and Stephanie Althoff and Klaus-Tycho Foerster},
  title     = {Local Fast Failover Routing on Directed Networks},
  booktitle = {Proc. of the 2024 14th International Workshop on Resilient Networks Design and Modeling (RNDM)},
  pages     = {1--8},
  year      = {2024},
  doi       = {10.1109/RNDM64105.2024.10820439}
}

@inproceedings{van2024short,
  author    = {Erik van den Akker and Klaus-Tycho Foerster},
  title     = {Short paper: Towards 2-resilient local failover in destination-based routing},
  booktitle = {Proc. of the 9th International Symposium on Algorithmic Aspects of Cloud Computing (ALGOCLOUD)},
  pages     = {17--25},
  year      = {2024},
  doi       = {10.1007/978-3-031-94677-6_2}
}
\end{document}